\numberwithin{equation}{section}
\begin{document}

\newtheorem{theorem}{Theorem}[section]
\newtheorem{lemma}[theorem]{Lemma}
\newtheorem{define}[theorem]{Definition}
\newtheorem{remark}[theorem]{Remark}
\newtheorem{corollary}[theorem]{Corollary}
\newtheorem{example}[theorem]{Example}
\newtheorem{assumption}[theorem]{Assumption}
\newtheorem{proposition}[theorem]{Proposition}
\newtheorem{conjecture}[theorem]{Conjecture}

\def\Ref#1{Ref.~\cite{#1}}

\def\Rnum{{\mathbb R}}
\def\pr{\mathop{\rm pr}}
\def\rk{{\rm rank}}
\def\X{\mathrm{X}}
\def\d{\mathrm{d}}
\def\lieder#1{{\mathcal L}_{#1}}

\def\smallbinom#1#2{{\textstyle \binom{#1}{#2}}}

\def\const{\text{const.}}

\def\Esp{\mathcal{E}}
\def\S{S}
\def\Rop{{\mathcal R}}
\def\Dop{{\mathcal D}}
\def\Hop{{\mathcal H}}
\def\Jop{{\mathcal J}}

\def\p{{\rm p}}

\tolerance=50000
\allowdisplaybreaks[3]

\title{A formula for symmetry recursion operators\\ from non-variational symmetries of\\ partial differential equations}

\author{
Stephen C. Anco${}^1$
\lowercase{\scshape{and}}
Bao Wang${}^2$
\\\\{\scshape{
${}^{1,2}$Department of Mathematics and Statistics\\
Brock University\\
St. Catharines, ON L2S3A1, Canada}}
\\{\scshape{
${}^2$School of Mathematics and Statistics\\
Ningbo University\\
Ningbo, 315211, China}}
}

\thanks{${}^1$sanco@brocku.ca, ${}^2$wangbao@nbu.edu.cn}

\begin{abstract}
An explicit formula to find symmetry recursion operators for partial differential equations (PDEs)
is obtained from new results connecting 
variational integrating factors and non-variational symmetries. 
The formula is special case of a general formula that produces a pre-symplectic operator
from a non-gradient adjoint-symmetry. 
These formulas are illustrated by several examples of 
linear PDEs and integrable nonlinear PDEs. 
Additionally, a classification of quasilinear second-order PDEs 
admitting a multiplicative symmetry recursion operator through the first formula is presented. 
\end{abstract}

\maketitle

\section{Introduction}

Symmetry recursion operators have important role \cite{BCA-book,Olv-book} 
in the theory of linear PDEs and nonlinear integrable PDEs. 
For linear PDEs, 
a symmetry recursion operator typically comes from 
the Frechet derivative of a Lie symmetry expressed in characteristic (evolutionary) form. 
In contrast, 
for integrable evolution PDEs, 
a symmetry recursion operator typically comes from 
the ratio of two compatible Hamiltonian operators. 

In the present paper, 
the main result will be to show that a symmetry recursion operator 
can arise in a similar way for nonlinear integrable PDEs
as it does for linear PDEs. 
Specifically, 
a simple explicit formula for a symmetry recursion operator of an Euler--Lagrange PDE
will be presented. 
The formula involves the Frechet derivative of a non-variational symmetry of the PDE, 
and thereby it gives a surprising and interesting new use for non-variational symmetries. 

This result is obtained from a more general result:
an explicit formula for a generalized pre-symplectic operator 
is derived from an adjoint-symmetry of a general PDE 
without needing existence of any variational structure. 
Adjoint-symmetries \cite{AncBlu1997,AncBlu2002b,Anc-review} are the solutions of the adjoint of the determining equation for symmetry vector fields,
and they yield a multiplier for a conservation law of the PDE 
when the product of an adjoint-symmetry and the PDE is annihilated by the Euler--Lagrange operator \cite{BCA-book,Olv-book}. 
The formula to obtain a generalized pre-symplectic operator 
utilizes adjoint-symmetries that are not multipliers. 

In case of Euler--Lagrange PDEs, 
a generalized pre-symplectic operator is the same as a symmetry recursion operator,
and a non-multiplier adjoint-symmetry is the same as a non-variational symmetry. 
For PDEs without any variational structure, 
a generalized pre-symplectic operator \cite{KerKraVer2004} 
is a linear differential operator (in total derivatives) that maps symmetries into adjoint-symmetries, 
analogously to the case of Hamiltonian evolution equations \cite{Olv-book,Dor-book}
where a symplectic operator is a linear differential operator (in total derivatives) that maps 
Hamiltonian symmetries into adjoint-symmetries that are Hamiltonian gradients. 

Our starting point for deriving the formula for a generalized pre-symplectic operator 
comes from a new result that connects variational integrating factors to 
multiplicative pre-symplectic operators. 
A variational integrating factor is a function whose product with a PDE yields an Euler--Lagrange equation. 

All of these new results will be illustrated through examples: 
(i) a linear wave equation and the Korteweg--de Vries equation in potential form
will be used to show how a symmetry recursion operator can be derived 
from a higher symmetry for Euler--Lagrange PDEs; 
(ii) adjoint-symmetries for a spatially-inhomogeneous Airy equation 
and a generalized Korteweg--de Vries equation with $p$-power nonlinearity
will be used to obtain symplectic operators in the case of evolution PDEs; 
(iii)  a 2D Boussinesq equation and a family of peakon equations 
containing the Camassa--Holm equation and its modified version
will be used to show how generalized pre-symplectic operators 
arise from adjoint-symmetries for nonlinear PDEs that neither have an evolution form nor an Euler--Lagrange form. 

The rest of the paper is organized as follows. 

In Section~\ref{sec:prelims}, 
some preliminaries that will be needed concerning 
symmetries, adjoint-symmetries, multipliers, and variational integrating factors 
are summarized. 

In Section~\ref{sec:variationalIFs}, 
the new connection between variational integrating factors 
and multiplicative pre-symplectic operators is derived for general PDEs. 
This connection is further illustrated by giving a classification of 
quasilinear second-order PDEs that admit a variational integrating factor. 

The main results are presented in Section~\ref{sec:mainresults}. 
First, the explicit formula yielding a generalized pre-symplectic operator 
from a non-multiplier adjoint-symmetry is derived. 
A necessary and sufficient condition for this formula to produce 
a conservation law multiplier from a symmetry is also obtained. 
Next, these new results are specialized to evolution PDEs. 
The formula is shown to have a particularly simple form given by 
the non-self-adjoint part of the Frechet derivative of an adjoint-symmetry,
and the condition for producing a conservation law multiplier is shown to be given by 
a Lie derivative condition. 
Euler--Lagrange PDEs are then considered. 
The formula is shown to yield a symmetry recursion operator. 
Here the condition for producing a conservation law multiplier 
coincides with the condition that the formula maps variational symmetries into variational symmetries. 

In Section~\ref{sec:examples}, 
the examples illustrating these results are presented. 

Finally, some concluding remarks are made in Section~\ref{sec:remarks}. 

Standard index conventions, which we will use hereafter, 
are summarized in the Appendix.

\section{Preliminaries}\label{sec:prelims}

Throughout, 
we consider scalar PDEs $G=0$ of any order $N\geq 1$, with 
$n$ independent variables $x=(x^1,\ldots,x^n)$ 
and a single dependent variable $u$. 
In the case of two independent variables, the notation $x^1=t,x^2=x$ will be used. 

The space of formal solutions $u(x)$ of $G=0$ will be denoted $\Esp$. 
Jet space is the coordinate space $J=(x,u,\partial_xu,\ldots)$. 
A differential function is a function on $J$ given by 
$f(x,u^{(k)})=f(x,u,\partial_xu,\ldots,\partial_x^ku)$
for some finite differential order $k\geq 0$. 
For general background on symmetries, adjoint-symmetries, and multipliers 
in this setting, see \Ref{BCA-book,Olv-book,Anc-review}. 

\subsection{Symmetries}

A \emph{symmetry} in characteristic form is a vector field $\hat \X_P = P\partial_u$ 
where $P$ is a differential function satisfying the determining equation
\begin{equation}\label{symm.deteqn}
\pr\hat\X_P (G)|_\Esp = G'(P)|_\Esp =0 . 
\end{equation}
Here $\pr$ denotes prolongation to an appropriate order,
and a prime denotes the Frechet derivative. 
Under a mild regularity condition \cite{Olv-book,Anc-review} on the differential function $G$, 
which holds for all typical PDEs that arise in applied mathematics and physics, 
symmetries satisfy 
\begin{equation}\label{symm.deteqn.offsolns}
G'(P) =R_P(G)
\end{equation}
holding off of $\Esp$,
where $R_P$ is a linear differential operator in total derivatives 
whose coefficients are non-singular differential functions on $\Esp$. 

A \emph{Lie point symmetry} is given by the vector field 
$P_\p= \eta(x,u) - \xi(x,u)^i u_{x^i}$. 
This characteristic form arises from the infinitesimal generator 
$\X_\p = \xi(x,u)^i \partial_{x^i} + \eta(x,u) \partial_u$
of a one-parameter transformation group on $(x,u)$,
which acts on functions $u=f(x)$ via $\hat\X_{\p} f(x) = P_\p|_{u=f(x)}$. 
The generator $\X_\p$ is related to the vector field $\hat\X_{\p}$ by 
\begin{equation}
\pr\X_\p = \pr\hat\X_{\p} +\xi^i D_{x^i} . 
\end{equation}

If $P=P_\p$ is the characteristic function of a Lie point symmetry,
then $\pr\X_\p(G) = F G$ holds for some differential function $F$, 
because the symmetry acts as a generator of a transformation group on $(x,u)$
which prolongs to a point transformation in jet space (up to any finite order).
Note that the differential order of $F$ cannot exceed the differential order of $G$. 
Off of $\Esp$, 
a Lie point symmetry will have $R_{\p} = F -\xi^i D_{x^i}$.

\subsection{Adjoint-symmetries}

An \emph{adjoint-symmetry} of $G=0$ is a differential function $Q$ that satisfies
the adjoint of the determining equation \eqref{symm.deteqn} for symmetries:
\begin{equation}\label{adjsymm.deteqn}
G'{}^*(Q)|_\Esp =0 . 
\end{equation}
Under the previously mentioned regularity condition on $G$, 
adjoint-symmetries satisfy 
\begin{equation}\label{adjsymm.deteqn.offsolns}
G'{}^*(Q) =R_Q(G)
\end{equation}
holding off of $\Esp$, 
where $R_Q$ is a linear differential operator in total derivatives
whose coefficients are non-singular differential functions on $\Esp$. 

In general, adjoint-symmetries have a geometrical meaning 
connected to 1-form fields as explained in \Ref{AncWan}. 
In the case of evolution PDEs, 
they can be viewed geometrically as $1$-forms $Q\d u$ 
which are invariant under the flow defined by the PDE \cite{Anc-review,AncWan}.

\subsection{Euler--Lagrange operator and variational integrating factors}

The Euler--Lagrange operator $E_u=\delta/\delta u$ is the variational derivative with respect to $u$.
It obeys the product rule 
\begin{equation}
E_u(fg) = f'{}^*(g) + g'{}^*(f) . 
\end{equation}

A PDE $G=0$ is an \emph{Euler--Lagrange equation}, 
$G=E_u(L)$ with a Lagrangian given by a differential function $L$, 
iff the Frechet derivative of $G$ is self-adjoint,  $G'=G'{}^*$ \cite{Olv-book,Anc-review}.
This operator equation can be split into the set of Helmholtz conditions 
involving higher Euler--Lagrange operators \cite{Olv-book,Anc-review}. 

Note that a Lagrangian is unique only up to a total divergence, $L\to L + D_{x^i} A^i$,
since $E_u(D_{x^i} A^i) \equiv 0$ holds for any differential functions $A^i$.

\begin{lemma}\label{lem:L.Gis2ndordquasilinear}
Any Euler--Lagrange PDE $G=E_u(L)=0$ that is quasilinear and of second order
possesses a Lagrangian $L$ that is of first order. 
\end{lemma}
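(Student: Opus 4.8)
The plan is to exploit the variational characterization $G'=G'{}^*$ together with quasilinearity to reconstruct a first-order Lagrangian explicitly, so that the hypothesis ``$G$ is Euler--Lagrange'' enters only through self-adjointness of $G'$. Write the PDE in quasilinear second-order form $G = A^{ij}u_{x^ix^j} + B$, where $A^{ij}=A^{ji}$ and $B$ are differential functions of $(x,u,\partial_x u)$ only. For any first-order $L$ the identity $E_u(L) = L_u - D_{x^i}(L_{u_{x^i}})$ shows that its Euler--Lagrange expression is again quasilinear and second-order, with leading coefficient $-L_{u_{x^i}u_{x^j}}$ together with a first-order remainder. Hence matching $G=E_u(L)$ splits into a leading equation $L_{u_{x^i}u_{x^j}} = -A^{ij}$ and a lower-order equation; the first task is to produce the Hessian of $L$, the second to fix the residual freedom.

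First I would extract the Helmholtz condition implied by $G'=G'{}^*$ for a quasilinear second-order $G$. Computing $G'(w)=A^{ij}w_{x^ix^j}+C^k w_{x^k}+Ew$ and its formal adjoint, the order-two coefficients agree automatically, the order-one coefficients give $C^j = D_{x^i}A^{ij}$, and the order-zero coefficients then follow. Separating in $C^j = D_{x^i}A^{ij}$ the terms linear in the second derivatives $u_{x^ix^j}$ yields, after a short symmetrization argument, that $\partial A^{ij}/\partial u_{x^k}$ is totally symmetric in $i,j,k$. This is exactly the integrability condition needed to integrate $L_{u_{x^i}u_{x^j}}=-A^{ij}$: a standard homotopy (scaling in the first-derivative variables, with $x,u$ as parameters) then produces a first-order differential function $W(x,u,\partial_x u)$ with $W_{u_{x^i}u_{x^j}}=-A^{ij}$.

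The key simplification comes next. Set $\tilde G := E_u(W)$; by construction its leading coefficient is $-W_{u_{x^i}u_{x^j}}=A^{ij}$, matching that of $G$, so the difference $H := G-\tilde G$ has no second derivatives and is a function of $(x,u,\partial_x u)$ alone. Since $G$ is Euler--Lagrange by hypothesis and $\tilde G=E_u(W)$ is so by construction, both have self-adjoint Frechet derivatives, hence so does $H$. But the Frechet derivative of a first-order $H$ is the operator $H_{u_{x^k}}D_{x^k}+H_u$, whose self-adjointness forces $H_{u_{x^k}}=0$ for every $k$. Thus $H=H(x,u)$ involves no derivatives, and it is trivially the Euler--Lagrange expression of the zeroth-order Lagrangian $b(x,u)=\int^u H(x,u')\,du'$, since $E_u(b)=b_u=H$. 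Therefore $G = E_u(W)+E_u(b)=E_u(W+b)$, and $L:=W+b$ is a first-order Lagrangian, as required.

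I expect the main obstacle to be the algebraic step converting the single Helmholtz condition $C^j=D_{x^i}A^{ij}$ into total symmetry of $\partial A^{ij}/\partial u_{x^k}$: one must correctly isolate the coefficients of the second-derivative jet coordinates and run the symmetrization that upgrades symmetry in the first index pair to full symmetry. Everything downstream---the homotopy reconstruction of $W$ and the reduction of the remainder $H$ by self-adjointness---is then routine, with the only caveat that the homotopy integration is local (valid on a star-shaped domain in the first-derivative variables), which is harmless in the present setting of differential functions.
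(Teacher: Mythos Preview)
Your proof is correct and takes a genuinely different route from the paper's. The paper begins with the standard homotopy Lagrangian $L=\int_0^1 uG(x,\lambda u^{(2)})\,d\lambda$, which is \emph{a priori} second order and linear in $u_{ij}$; it then extracts from the quasilinearity of $G$ a tensor identity on the coefficients $L_1^{ij}$ (equation (2.3) in the paper), solves that identity by a somewhat intricate symmetrization/antisymmetrization argument, and finally exhibits an explicit total divergence that kills the second-order terms. By contrast, you never write down a candidate Lagrangian to be corrected: you read off directly from the order-one Helmholtz condition $C^j=D_{x^i}A^{ij}$ that $\partial A^{ij}/\partial u_{x^k}$ is totally symmetric, integrate twice in the first-derivative variables to produce a first-order $W$ with the prescribed Hessian, and then dispose of the remainder $H=G-E_u(W)$ by the one-line observation that a self-adjoint first-order operator must have vanishing top coefficient. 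Your argument is shorter and more conceptual; the paper's has the virtue of showing concretely how \emph{any} second-order Lagrangian for $G$ (in particular the canonical homotopy one) can be gauged down to first order, which is a slightly stronger statement than existence. The ``main obstacle'' you flag---upgrading symmetry in two indices to total symmetry---is indeed the only nontrivial algebraic step, and it goes through by the standard two-equation trick (apply the relation, swap two indices, add).
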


\begin{proof}
A Lagrangian for any such PDE,
$G(x,u^{(2)})=G_1^{ij}(x,u^{(1)})u_{ij} + G_0(x,u^{(1)})=0$,
can be constructed by the homotopy formula \cite{Anc-review}
\begin{equation}\label{L.homotopy}
L=\int_0^1 u G(x,\lambda u^{(2)})\,d\lambda
\end{equation}
which yields
\begin{equation*}
L=L_1^{ij}(x,u^{(1)}) u_{ij}+ L_0(x,u^{(1)})
\end{equation*}
where 
\begin{equation}\label{L.conditions}
L_1^{ij}{}_{u_k u_l} + L_1^{kl}{}_{u_i u_j}
 -\tfrac{1}{2}( L_1^{ik}{}_{u_j u_l} + L_1^{jk}{}_{u_i u_l} + L_1^{il}{}_{u_j u_k} + L_1^{jl}{}_{u_i u_k} ) =0 
\end{equation}
due to $G$ being quasilinear and of second order.
We can solve equation \eqref{L.conditions} by tensor algebra methods
as follows.

First we antisymmetrize over the indices $j,k$.
This yields
\begin{equation}\label{L.conditions1}
L_1^{ij}{}_{u_k u_l} - L_1^{ik}{}_{u_j u_l} - L_1^{lj}{}_{u_i u_k} + L_1^{kl}{}_{u_i u_j} =0 ,
\end{equation}
where we have used the symmetry $L_1^{ij}{}_{u_k u_l} = L_1^{ji}{}_{u_k u_l}$.
From equation \eqref{L.conditions1}, we see that
$L_1^{ij}{}_{u_k u_l} - L_1^{ik}{}_{u_j u_l}$ is symmetric in the indices $i,l$,
and thus it must have a gradient form with respect to $u_i$ and $u_l$:
\begin{equation}\label{L.conditions2}
L_1^{ij}{}_{u_k u_l} - L_1^{ik}{}_{u_j u_l}
= \tfrac{1}{2}( L_2^{j}{}_{u_i u_k u_l} - L_2^{k}{}_{u_i u_j u_l} )
\end{equation}
for some differential function $L_2^{j}(x,u^{(1)})$.
The solution of equation \eqref{L.conditions2} can be shown to be given by 
\begin{equation*}
L_1^{ij}{}_{u_k u_l} = \tfrac{1}{2}( L_2^{i}{}_{u_j u_k u_l} + L_2^{j}{}_{u_i u_k u_l} ) 
\end{equation*}
up to a pure gradient term that can be absorbed into the r.h.s.\ terms. 
Next we integrate with respect to the variables $u_k$ and $u_l$, yielding
\begin{equation*}
L_1^{ij} = \tfrac{1}{2}(L_2^{i}{}_{u_j} + L_2^{j}{}_{u_i}) + L_3^{ijk} u_k + L_4^{ij}
\end{equation*}
for some functions $L_3^{ijk}(x,u)$ and $L_4^{ij}(x,u)$
which are symmetric in their indices $i,j$.

Hence, we have
\begin{equation}\label{L.form}
L=( L_2^{i}{}_{u_j}(x,u^{(1)}) + L_3^{ijk}(x,u) u_k + L_4^{ij}(x,u) )u_{ij} + L_0(x,u^{(1)}) .
\end{equation}
We can now use the gauge freedom of adding a total divergence $D_{x^i} A^i(x,u^{(1)})$ to eliminate all of the second-order terms in $L$. 
Put
\begin{equation*}
A^i = -L_2^{i} -(L_3^{ijk} - \tfrac{1}{2}L_3^{jki}) u_j u_k -L_4^{ij}u_j,
\end{equation*}
whence we have 
\begin{equation*}
\begin{aligned}
D_{x^i} A^i = & 
-L_2^{i}{}_{u_j} u_{ij} - L_2^{i}{}_{u} u_{i} - L_2^{i}{}_{x^i}
-L_3^{ijk}u_{ij} u_k - \tfrac{1}{2} L_3^{ijk}{}_{u} u_i u_j u_k
\\&\quad
- (L_3^{ijk}{}_{x^i} - \tfrac{1}{2} L_3^{jki}{}_{x^i}) u_j u_k
-L_4^{ij}u_{ij} - L_4^{ij}{}_{u} u_i u_j - L_4^{ij}{}_{x^i} u_j .
\end{aligned}
\end{equation*}  
This yields the equivalent Lagrangian 
\begin{equation*}
\tilde L =L+ D_{x^i} A^i  =
L_0 - L_2^{i}{}_{u} u_{i} - L_2^{i}{}_{x^i}
- \tfrac{1}{2} L_3^{ijk}{}_{u} u_i u_j u_k
- (L_3^{ijk}{}_{x^i} - \tfrac{1}{2} L_3^{jki}{}_{x^i}) u_j u_k
- L_4^{ij}{}_{u} u_i u_j - L_4^{ij}{}_{x^i} u_j
\end{equation*}
which is of first order. 
\end{proof}

A \emph{variational integrating factor} of $G$ is a differential function $W\not\equiv0$ 
such that $WG$ has the form of an Euler--Lagrange equation: 
$E_u(L)= WG$, for some Lagrangian $L$.
If $W=\const$, it is called trivial. 

\subsection{Multipliers and conservation laws}

A local \emph{conservation law} of $G=0$ is a continuity equation 
\begin{equation}\label{conslaw}
D_{x^i} \Phi^i |_\Esp =0
\end{equation}
where $\Phi^i$ are differential functions called a conserved current. 
In the situation where one independent variable is time $t$, 
the time and space components of the conserved current are called the density and the fluxes, respectively.  

If the conserved current has the form of a curl, 
$\Phi^i|_\Esp = D_{x^j}\Theta^{ij}|_\Esp$ for some differential functions $\Theta^{ij}=-\Theta^{ji}$ 
which are components of a skew tensor, 
then the conservation law \eqref{conslaw} is said to be trivial 
because it contains no useful information about the solutions of $G=0$. 
Two conservation laws are considered to be equivalent if they differ by a trivial conservation law. 

Under the same regularity conditions on $G$ already mentioned, 
a non-trivial conservation law \eqref{conslaw} can be expressed in the form 
\begin{equation}
D_{x^i} \tilde\Phi^i = QG
\end{equation}
holding off of $\Esp$, 
where $\tilde\Phi^i$ is equivalent to $\Phi^i$, 
namely $(\tilde\Phi^i - \Phi^i)|_\Esp$ is a curl. 
Here $Q$ is a differential function called the \emph{multiplier} of the conservation law. 

The determining equation for multipliers is simply
\begin{equation}
E_u(QG) \equiv 0
\end{equation}
which holds off of $\Esp$. 
Using the product rule for the Euler--Lagrange operator, 
this determining equation splits into the adjoint-symmetry equation \eqref{adjsymm.deteqn} for $Q$ on $\Esp$, 
plus a system of Helmholtz-type equations for $Q$ 
involving higher Euler operators \cite{Olv-book,Anc-review}. 

\subsection{Variational symmetries} 

For an Euler--Lagrange PDE $G=E_u(L)=0$, 
a symmetry of the Lagrangian $L$ is a vector field $\hat\X_P = P\partial_u$ such that 
$\pr\hat\X_P(L) = L'(P) = D_{x^i} A^i$ 
for some differential functions $A^i$. 
These symmetries are called \emph{variational}. 
More succinctly, variational symmetries are the solutions of the determining equation 
$E_u(L'(P))= E_u(GP)\equiv 0$. 

Thus $P$ will be the characteristic function of a variational symmetry 
iff $P$ is a conservation law multiplier. 
This is the statement of Noether's theorem. 

\subsection{Generalized pre-symplectic operators and pre-Hamiltonian operators}

In the terminology of \Ref{KerKraVer2004}, 
a generalized pre-symplectic operator is a linear differential operator (in total derivatives)
that maps symmetries into adjoint-symmetries,
and likewise a generalized pre-Hamiltonian operator is a linear differential operator (in total derivatives) 
that maps adjoint-symmetries into symmetries. 
The composition of these operators yields a recursion operator on (adjoint-) symmetries.

\section{Results on Variational Integrating Factors}\label{sec:variationalIFs}

As a starting point, for a general PDE $G=0$, 
a simple relationship between variational integrating factors, symmetries, and adjoint-symmetries 
will be presented. 

\begin{proposition}\label{prop:varIF.symm-to-adjsymm}
If $G=0$ admits a variational integrating factor $W$, 
then for any symmetry $\hat\X_P=P\partial_u$ (in characteristic form) of $G=0$, 
there is an adjoint-symmetry $Q= W P$. 
\end{proposition}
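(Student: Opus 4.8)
The plan is to exploit the self-adjointness of the Frechet derivative of the Euler--Lagrange expression $WG$ and then restrict the resulting operator identity to $\Esp$. Since $W$ is a variational integrating factor, $WG=E_u(L)$ for some Lagrangian $L$, so by the criterion that an equation is of Euler--Lagrange form iff its Frechet derivative is self-adjoint, we have the operator equation $(WG)'=(WG)'{}^*$. Conceptually, the content of the proposition is that $P$ remains a symmetry of the self-adjoint equation $WG=0$, and for a self-adjoint operator a symmetry is automatically an adjoint-symmetry; the calculation below makes this precise while controlling the off-solution terms.

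The first step is to write out both sides of $(WG)'=(WG)'{}^*$ explicitly. For the left-hand side, the product rule for the Frechet derivative gives $(WG)'(v)=W\,G'(v)+G\,W'(v)$. For the adjoint I would use that multiplication operators are self-adjoint together with the reversal rule $(AB)^*=B^*A^*$, giving $(W\,G')^*=G'{}^*(W\,\cdot\,)$ and $(G\,W')^*=W'{}^*(G\,\cdot\,)$, hence $(WG)'{}^*(f)=G'{}^*(Wf)+W'{}^*(Gf)$. Equating the two expressions as operators and applying both sides to the symmetry characteristic $P$ produces the single off-solution identity
\begin{equation*}
W\,G'(P)+G\,W'(P)=G'{}^*(WP)+W'{}^*(GP).
\end{equation*}

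The next step is to restrict this identity to $\Esp$. On $\Esp$ the first term on the left vanishes by the symmetry determining equation $G'(P)|_\Esp=0$, and the second term vanishes because it carries an explicit factor of $G$. Solving for the remaining pieces then yields $G'{}^*(WP)|_\Esp=-W'{}^*(GP)|_\Esp$, so that $Q=WP$ is an adjoint-symmetry precisely when $W'{}^*(GP)|_\Esp=0$.

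I expect this last vanishing to be the main obstacle, since $W'{}^*$ is a differential operator and $GP|_\Esp=0$ does not by itself control derivatives of $GP$ on $\Esp$. The key observation that dispatches it is that $\Esp$ is closed under total differentiation: every prolonged equation obtained by applying iterated total derivatives to $G$ also holds on $\Esp$, not merely $G=0$ itself. Because $W'{}^*$ is a linear operator in total derivatives and $GP$ carries an explicit factor of $G$, expanding $W'{}^*(GP)$ by the Leibniz rule produces only terms each containing some total derivative of $G$; all of these vanish on $\Esp$. Hence $W'{}^*(GP)|_\Esp=0$, giving $G'{}^*(WP)|_\Esp=0$ and completing the argument. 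This same remark also explains why no regularity hypothesis beyond the standing one is needed here.
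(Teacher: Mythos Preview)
Your proof is correct and follows essentially the same route as the paper's: both start from the self-adjointness identity $WG'+GW'=G'{}^*W+W'{}^*G$, apply it to $P$, and restrict to $\Esp$. You are actually more explicit than the paper about why the term $W'{}^*(GP)|_\Esp$ vanishes; the paper passes over this silently and simply records $G'{}^*(WP)|_\Esp=WG'(P)|_\Esp=0$.
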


\begin{proof}
$WG$ is Euler--Lagrange iff 
\begin{equation}\label{WG.selfadjoint}
WG'+GW'=G'{}^* W +W'{}^* G .
\end{equation}
First, rearrange this operator equation into the form 
$G'{}^* W = WG'+GW' -W'{}^* G$. 
Next, apply it to any symmetry characteristic function $P$:
$G'{}^*(WP)= WG'(P)+GW'(P) -W'{}^*(GP)$. 
Hence, on $\Esp$, we get
\begin{equation}
G'{}^*(WP)|_\Esp =WG'(P)|_\Esp = 0 
\end{equation}
when $P\partial_u$ is a symmetry. 
\end{proof}

The mapping $P\to Q=WP$ can be iterated when $G=0$ is Euler--Lagrange, 
since adjoint-symmetries coincide with symmetries. 

\begin{proposition}\label{prop:varIF.symm-to-symm}
If $G=0$ is Euler--Lagrange and admits a variational integrating factor $W$, 
then starting from any symmetry $\hat\X_P=P\partial_u$ (in characteristic form) of $G=0$, 
there is a sequence of symmetries $\hat\X^{(l)} = W^l P\partial_u$, $l=0,1,2,\ldots$. 
Thus, $W$ is a multiplicative recursion operator for symmetries. 
\end{proposition}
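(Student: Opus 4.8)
The plan is to combine Proposition~\ref{prop:varIF.symm-to-adjsymm} with the defining property of an Euler--Lagrange PDE and then run an induction on $l$. The key observation is that when $G'=G'{}^*$, the symmetry determining equation \eqref{symm.deteqn} and the adjoint-symmetry determining equation \eqref{adjsymm.deteqn} are literally the same equation, so that a differential function is a symmetry characteristic if and only if it is an adjoint-symmetry. This equivalence is exactly what allows the output of Proposition~\ref{prop:varIF.symm-to-adjsymm}, which lands in the space of adjoint-symmetries, to be fed back in as an admissible input, which must be a symmetry.

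First I would establish the base case $l=0$, which is just the hypothesis that $P$ is a symmetry. For the inductive step, I would assume that $P^{(l)}=W^l P$ is the characteristic function of a symmetry. Applying Proposition~\ref{prop:varIF.symm-to-adjsymm} to this symmetry and the fixed variational integrating factor $W$ shows that $W P^{(l)} = W^{l+1} P$ is an adjoint-symmetry, i.e.\ $G'{}^*(W^{l+1}P)|_\Esp=0$. Invoking $G'=G'{}^*$ rewrites this as $G'(W^{l+1}P)|_\Esp=0$, which is precisely the symmetry determining equation \eqref{symm.deteqn} for $W^{l+1}P$. Hence $\hat\X^{(l+1)}=W^{l+1}P\,\partial_u$ is a symmetry, completing the induction. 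Since each $\hat\X^{(l)}$ is obtained from $\hat\X^{(l-1)}$ by multiplying the characteristic function by $W$, the operator of multiplication by $W$ sends symmetries to symmetries, which is the asserted statement that $W$ is a multiplicative recursion operator.

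I do not expect any serious obstacle here, as the argument is essentially a bookkeeping of which space each object inhabits. The only point that requires care is that Proposition~\ref{prop:varIF.symm-to-adjsymm} must be applicable at every stage; this is automatic because it is proved for the fixed factor $W$ and an \emph{arbitrary} symmetry characteristic, and at each step the current iterate $W^l P$ has already been shown to be a symmetry. One should also note that the Euler--Lagrange condition $G'=G'{}^*$ is used as an identity of operators holding off of $\Esp$, so the equivalence of the two determining equations holds without any regularity caveat beyond what has already been assumed.
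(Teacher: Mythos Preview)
Your argument is correct and follows essentially the same approach as the paper: invoke Proposition~\ref{prop:varIF.symm-to-adjsymm} together with the identification of symmetries and adjoint-symmetries under $G'=G'{}^*$, then iterate. The paper's proof is simply a terser version of yours, omitting the explicit induction since the single step $P\mapsto WP$ already establishes that multiplication by $W$ is a symmetry recursion operator.
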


\begin{proof}
Proposition~\ref{prop:varIF.symm-to-adjsymm} shows that $WP$ will be a symmetry of $G=0$. 
Hence $W$ is a multiplicative recursion operator for symmetries. 
\end{proof}

$W=1$ is a trivial example. 
More generally, $W=\const$ is trivial as a recursion operator,
and so we will be interested to find examples in which $W$ is not a constant. 

To begin, 
we derive a necessary condition for existence of a non-trivial variational integrating factor 
when $G=0$ is Euler--Lagrange. 

The determining equation for $W$ is the operator equation \eqref{WG.selfadjoint},
which reduces to 
\begin{equation}\label{Wdeteqn}
(G' W-WG')|_\Esp=0 
\end{equation}
when $G$ is Euler--Lagrange. 
In this determining equation \eqref{Wdeteqn}, 
the terms without derivatives on $W$ will cancel. 
So we have 
\begin{equation}\label{Wdeteqn.lhs}
G'W -WG' = \sum_{\substack{J\\ |K|<|J|\leq N}} \smallbinom{J}{K} (G_{u^J})(D_{J/K} W) D_{K}
\end{equation}
where $K$ is a free multi-index with $0\leq|K|\leq N-1$. 
Since this expression \eqref{Wdeteqn.lhs} must vanish on $\Esp$, 
we find that the highest-order terms $D_{K}$ yield
\begin{equation}\label{Weqn.highestord}
\big( \sum_{i} \smallbinom{K,i}{i} (G_{u_{i,K}}) (D_{i} W) \big)|_\Esp =0,
\quad
|K|=N-1 .
\end{equation}
This is a set of linear equations on $\{D_i W\}$, $i=1,2,\ldots n$, 
which is indexed by $K = \{k_1,\ldots,k_{N-1}\}$. 
Viewing $D_i W$ as the components of a column vector, 
we see that the coefficient matrix 
\begin{equation}\label{DGmatrix}
\begin{pmatrix}
(\binom{K,i}{i} G_{u_{i,K}})|_\Esp
\end{pmatrix}
\end{equation}
defined by this set of linear equations \eqref{Weqn.highestord}
has rows indexed by $\{k_1,\ldots,k_{N-1}\}$
and columns indexed by $i$. 
Its size is $\binom{n+N-2}{n-1}\times n$. 

Existence of a non-constant solution for $W$
requires that the rank of the coefficient matrix \eqref{DGmatrix} 
is not maximal.
This leads to the following necessary condition on $G$. 

\begin{proposition}\label{prop:rankDGmatrix}
An Euler--Lagrange PDE $G=E_u(L)=0$ 
admits a non-constant variational integrating factor
only if 
\begin{equation}\label{rankDGmatrix}
\rk
\begin{pmatrix}
(\binom{i_1i_2\ldots i_N}{i_1} G_{u_{i_1i_2\ldots i_N}})|_\Esp
\end{pmatrix}
\leq n-1 .
\end{equation}
\end{proposition}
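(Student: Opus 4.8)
The plan is to obtain the rank bound \eqref{rankDGmatrix} directly from the highest-order part of the determining equation for $W$, namely the linear system \eqref{Weqn.highestord} that the preceding discussion has already isolated. So the argument is essentially a packaging of that derivation together with one elementary observation about non-constant differential functions.

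First I would recall that, since $G$ is Euler--Lagrange, one has $G'=G'{}^*$, so the self-adjointness condition \eqref{WG.selfadjoint} for $W$ to be a variational integrating factor collapses to \eqref{Wdeteqn}. Expanding the operator composition $G'W$ by the Leibniz rule and cancelling the $K=J$ contribution (which reproduces $WG'$) yields the expansion \eqref{Wdeteqn.lhs}, in which every surviving term carries at least one total derivative of $W$. Since this total-derivative operator must vanish on $\Esp$ when applied to an arbitrary function, each coefficient of a distinct $D_K$ must vanish on $\Esp$ separately; isolating the top weight $|K|=N-1$ gives precisely the homogeneous linear system \eqref{Weqn.highestord} in the $n$ quantities $D_1W,\ldots,D_nW$, whose coefficient matrix is \eqref{DGmatrix}.

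Next I would argue that a non-constant $W$ forces this system to admit a nontrivial solution. The elementary fact needed is that $D_iW=0$ for all $i$ implies $W=\const$: if $W$ depended on a jet coordinate $u_J$ of maximal order $|J|=k\geq1$, then the order-$(k+1)$ part of $D_iW$ would be $\sum_{|J|=k}W_{u_J}u_{i,J}$, whose vanishing forces $W_{u_J}=0$ for all such $J$, a contradiction; hence $W=W(x)$, and then $D_iW=W_{x^i}=0$ gives $W=\const$. Consequently, for non-constant $W$ the vector $(D_1W,\ldots,D_nW)$ is not identically zero, and being a nonzero solution of \eqref{Weqn.highestord} it shows that \eqref{DGmatrix} cannot have full column rank $n$. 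Finally, relabeling the length-$N$ multi-index as $i_1i_2\ldots i_N$ and peeling off $i_1$ (so $i=i_1$ and $K=\{i_2,\ldots,i_N\}$) turns the entry $\smallbinom{K,i}{i}G_{u_{i,K}}$ into $\smallbinom{i_1i_2\ldots i_N}{i_1}G_{u_{i_1i_2\ldots i_N}}$; since transposition does not change the rank, the bound $\rk\leq n-1$ transfers to the matrix in \eqref{rankDGmatrix}.

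I expect the only genuine subtlety to be the meaning of the rank statement rather than any computation. The entries of \eqref{DGmatrix} are differential functions restricted to $\Esp$, so ``$\rk\leq n-1$'' must be read as a rank bound over the appropriate ring of such functions, equivalently a pointwise condition holding generically on $\Esp$; correspondingly, one must ensure that the relevant solution vector $(D_iW)|_\Esp$ is not wiped out upon restriction to $\Esp$, which is where the standing regularity hypothesis on $G$ enters. Everything else is the bookkeeping of the Leibniz expansion and the combinatorial matching of the binomial factors.
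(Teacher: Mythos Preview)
Your proposal is correct and follows essentially the same route as the paper: the paper's argument is precisely the derivation in the paragraphs leading up to the proposition (reduction to \eqref{Wdeteqn}, the Leibniz expansion \eqref{Wdeteqn.lhs}, extraction of the top-order system \eqref{Weqn.highestord}, and the observation that a non-constant $W$ forces the coefficient matrix to have non-maximal rank). If anything you supply more detail than the paper does, by spelling out why $D_iW\equiv 0$ forces $W=\const$ and by flagging the ring-of-functions rank subtlety, which the paper leaves implicit.
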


The rank condition \eqref{rankDGmatrix} constitutes an equation on the form of $G$. 
It can be expressed in a simple, more explicit way 
in the case when $G$ is of second order. 

For $N=2$, 
the matrix \eqref{DGmatrix} has size $n\times n$,
and hence the rank condition is equivalent to the vanishing of the determinant 
\begin{equation}\label{detDGmatrix}
\det\big( (\delta_{ij} + 1) G_{u_{ij}} \big)_\Esp
=0
\end{equation}
where the coefficients in the matrix are equal to 
$2$ on the diagonal and $1$ on the off-diagonal. 
Moreover, this determinant equation has even a nicer form when $G$ is quasilinear.
By Lemma~\ref{lem:L.Gis2ndordquasilinear}, 
the Lagrangian for $G$ can be assumed to have the form $L(x,u^{(1)})$,
so that 
$G = E_u(L) = L_{u} - L_{x^i u_i} -  u_i L_{uu_i} - u_{ij}L_{u_j u_i}$. 
Then we have $G_{u_{ij}} = -L_{u_{i} u_{j}}$,
which can be used to express the determinant equation \eqref{detDGmatrix} 
entirely in terms of $L$:
\begin{equation}\label{detDLmatrix}
\det\big( L_{u^iu^j} \big)
=0 . 
\end{equation}
Thus, we have established the following result. 

\begin{proposition}\label{prop:detDGmatrix.Gis2ndordquasilinear.MAeqn.L}
If a quasilinear second-order Euler--Lagrange PDE $G=E_u(L)$
admits a non-constant variational integrating factor, 
then the first-order Lagrangian satisfies the Monge--Ampere equation \eqref{detDLmatrix}. 
\end{proposition}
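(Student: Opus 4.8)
The plan is to obtain the Monge--Ampère equation \eqref{detDLmatrix} by specializing the general rank criterion of Proposition~\ref{prop:rankDGmatrix} to second order and then rewriting it entirely in terms of the Lagrangian. First I would invoke Proposition~\ref{prop:rankDGmatrix}, which says that a non-constant variational integrating factor forces the coefficient matrix \eqref{DGmatrix} to have rank at most $n-1$. The decisive structural observation is that for $N=2$ the free multi-index $K$ satisfies $|K|=N-1=1$, so it reduces to a single index; hence the matrix \eqref{DGmatrix} is indexed by one row index and one column index and is therefore $n\times n$ and square. For a square matrix the condition $\rk\leq n-1$ is equivalent to vanishing of the determinant, so \eqref{rankDGmatrix} collapses to the single scalar equation \eqref{detDGmatrix}. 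To make the entries explicit I would evaluate the multi-index binomial: for the second-order multi-index $(k,i)$ one has $\smallbinom{(k,i)}{i}=1+\delta_{ki}$, equal to $2$ when $k=i$ (the index $i$ occurs twice) and $1$ otherwise, which is exactly the weighting ($2$ on the diagonal, $1$ off-diagonal) multiplying $G_{u_{ij}}$ in \eqref{detDGmatrix}.

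Next I would use Lemma~\ref{lem:L.Gis2ndordquasilinear} to replace $L$ by an equivalent first-order Lagrangian $L(x,u^{(1)})$, which leaves $G$ unchanged. Applying the Euler--Lagrange operator gives $G=E_u(L)=L_{u}-L_{x^iu_i}-u_iL_{uu_i}-u_{ij}L_{u_ju_i}$, so the only second-order dependence of $G$ resides in the last term and its coefficient is the Hessian $L_{u_iu_j}$. Reading off this coefficient yields the relation between $G_{u_{ij}}$ and $L_{u_iu_j}$, which I would substitute into \eqref{detDGmatrix}. After factoring out the resulting overall constant, the weighted matrix in \eqref{detDGmatrix} becomes a nonzero scalar multiple of the Hessian $(L_{u_iu_j})$, so its determinant vanishes precisely when $\det(L_{u_iu_j})=0$, which is the Monge--Ampère equation \eqref{detDLmatrix}.

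The step requiring the most care is this last substitution: the bookkeeping of the symmetrization (multiplicity) factors attached to the mixed second-order jet coordinates $u_{ij}$ with $i\neq j$. Since $u_{ij}=u_{ji}$ is a single symmetric coordinate, differentiating the contracted term $u_{ij}L_{u_ju_i}$ with respect to it produces a factor that differs between the diagonal and the off-diagonal entries, and one must verify that this factor combines with the binomial weight $1+\delta_{ij}$ to give one and the same constant on every entry. Only when this uniform cancellation is confirmed does \eqref{detDGmatrix} reduce to the undistorted Monge--Ampère determinant \eqref{detDLmatrix} rather than to the determinant of a Hadamard-weighted Hessian. I would therefore carry out the index computation explicitly, treating the cases $i=j$ and $i\neq j$ separately, before drawing the conclusion.
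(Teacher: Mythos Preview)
Your approach is essentially identical to the paper's derivation (which appears in the text immediately preceding the proposition rather than as a separate proof): specialize Proposition~\ref{prop:rankDGmatrix} to $N=2$ so the matrix is square and the rank condition becomes the determinant \eqref{detDGmatrix}, invoke Lemma~\ref{lem:L.Gis2ndordquasilinear} for a first-order $L$, and read off $G_{u_{ij}}$ in terms of $L_{u_iu_j}$. Your flagging of the symmetrization bookkeeping is apt --- the paper simply writes $G_{u_{ij}}=-L_{u_iu_j}$ without tracking the diagonal/off-diagonal multiplicities, so your plan to verify the cancellation case by case is a legitimate sharpening of that step.
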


For a general Euler--Lagrange PDE,
we will now explain how, surprisingly, 
existence of a non-variational Lie point symmetry of $G=E_u(L)=0$ 
leads to an explicit formula for a variational integrating factor. 

\begin{proposition}\label{prop:EL.recursion}
Suppose an Euler--Lagrange equation $G=0$ possesses 
a non-variational Lie point symmetry 
$\X_\p=\xi(x,u)^i\partial_{x^i} +\eta(x,u)\partial_u$. 
Then $G=0$ possesses a variational integrating factor 
\begin{equation}\label{W}
W=f_\p +\eta_u+\xi^i_{x^i}
\end{equation}
with $f_\p$ defined by 
\begin{align}\label{X.on.G}
\pr \X_\p(G)=f_\p G . 
\end{align}
If this function \eqref{W} is non-constant, 
then it defines a non-trivial multiplicative recursion operator for the symmetries of $G=0$.
\end{proposition}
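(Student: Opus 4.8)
The plan is to exhibit an explicit Lagrangian for $WG$ by proving the single jet-space identity $E_u(GP_\p) = (f_\p+\eta_u+\xi^i_{x^i})G$, where $P_\p=\eta-\xi^i u_{x^i}$ is the characteristic form of $\X_\p$. If this holds, then $WG = E_u(L)$ with $L=GP_\p$, so $W$ is a variational integrating factor directly from the definition, and the recursion-operator conclusion then follows at once from Proposition~\ref{prop:varIF.symm-to-symm}. The creative step is recognizing that the right candidate Lagrangian is simply $GP_\p$; everything else is a verification of this identity.

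To establish it, I would apply the product rule for the Euler--Lagrange operator, $E_u(GP_\p) = G'{}^*(P_\p) + P_\p'{}^*(G)$, and simplify the two terms separately. For the first term, since $G=E_u(L)$ is Euler--Lagrange we have $G'{}^* = G'$, and the off-solution symmetry relation gives $G'(P_\p) = R_{P_\p}(G) = (f_\p-\xi^i D_{x^i})G$; this uses $\pr\X_\p(G)=f_\p G$ together with $\pr\X_\p = \pr\hat\X_{P_\p}+\xi^i D_{x^i}$, which jointly identify $R_{P_\p}$ as in the preliminaries. For the second term, I would read off the Frechet derivative $P_\p' = (\eta_u-\xi^i_u u_{x^i}) - \xi^i D_{x^i}$ from the formula for $P_\p$ and take its adjoint, obtaining $P_\p'{}^*(G) = (\eta_u-\xi^i_u u_{x^i})G + D_{x^i}(\xi^i G)$.

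The crux of the computation, and the step most prone to sign and bookkeeping errors, is expanding $D_{x^i}(\xi^i G) = \xi^i D_{x^i} G + (\xi^i_{x^i}+\xi^i_u u_{x^i})G$ and tracking the cancellations: the undifferentiated $-\xi^i_u u_{x^i} G$ cancels the $+\xi^i_u u_{x^i} G$ produced by $D_{x^i}\xi^i$, leaving $P_\p'{}^*(G) = (\eta_u+\xi^i_{x^i})G + \xi^i D_{x^i}G$, after which the two opposite $\xi^i D_{x^i}G$ contributions from the two terms cancel against one another. What survives is exactly $(f_\p+\eta_u+\xi^i_{x^i})G = WG$, the desired identity. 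Finally, because $\X_\p$ is assumed non-variational, $P_\p$ is not a multiplier, so $E_u(GP_\p)\not\equiv 0$ and hence $W\not\equiv 0$, confirming that $W$ is genuinely a variational integrating factor; when $W$ is moreover non-constant, Proposition~\ref{prop:varIF.symm-to-symm} upgrades it to a non-trivial multiplicative recursion operator for the symmetries of $G=0$.
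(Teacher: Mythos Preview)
Your proof is correct and follows essentially the same route as the paper: both compute $E_u(P_\p G)=G'{}^*(P_\p)+P_\p'{}^*(G)$, use $G'{}^*=G'$ together with $G'(P_\p)=(f_\p-\xi^iD_{x^i})G$, expand $P_\p'{}^*(G)$ to $(\eta_u+\xi^i_{x^i})G+\xi^iD_{x^i}G$, and observe the cancellation that leaves $WG$. The only cosmetic difference is that the paper names the Lagrangian for $WG$ as $\pr\hat\X_\p(L)$ rather than your $GP_\p$, but these differ by a total divergence (via $L'(P_\p)=P_\p E_u(L)+D_{x^i}\Theta^i$), so the two presentations are equivalent.
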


\begin{proof}
The characteristic form of $\X_\p$ is 
\begin{equation}\label{P}
\hat\X_\p=P_\p\partial_u, 
\quad
P_\p=\eta -\xi^i u_{x^i} . 
\end{equation}
Hence 
\begin{align}\label{G'Pterm}
\pr\hat\X_\p(G)=(f_\p -\xi^i D_{x^i})G = G'(P_\p) . 
\end{align}
Now, consider the equation
\begin{equation}\label{ELterms}
E_u(\pr\hat\X_\p(L))
= E_u(P_\p G)=P_\p'{}^*(G)+G'(P_\p)
\end{equation}
where
\begin{equation}\label{P'*term}
\begin{aligned}
P_\p'{}^*(G) 
= & (\eta -\xi^i u_{x^i})'{}^*(G)\\
= & \eta_u G  -u_{x^i}\xi^i_u G  + D_{x^i}(\xi^i G)\\
= & (\eta_u + \xi^i_{x^i}) G +\xi^iD_{x^i}G . 
\end{aligned}
\end{equation}
Substitution of \eqref{P'*term} and \eqref{G'Pterm} into equation \eqref{ELterms}
yields 
\begin{equation}
P_\p'{}^*(G)+G'(P_\p)
=WG = E_u(\pr\hat\X_\p(L)) . 
\end{equation}
This shows, firstly, that $WG$ is Euler--Lagrange, and secondly, that $W$ is non-zero 
when $\pr\hat\X_\p(L)$ is not a total divergence. 
Hence, if the symmetry $\hat\X_\p$ is non-variational, 
then $W$ is a variational integrating factor. 
\end{proof}

We will now apply this result to some common geometrical examples of Lie point symmetries.

\emph{Translation symmetry}: 
$\X = a^i\partial_{x^i}$,
with $a^i=\const$ being the components of a vector in $\Rnum^n$. 
If $G$ is invariant off $\Esp$, namely $\X G =0$, then $f_\p=0$ yields 
$W=0$ which is trivial. 

\emph{Scaling symmetry}: 
$\X = \nu u\partial_u + \mu_{(i)} x^i\partial_{x^i}$,
with $\nu$ and $\mu_{(i)}$ being (constant) scaling weights. 
If $G$ has scaling weight $\omega$, namely $\X G = \omega G$ off $\Esp$,
then $W= \omega+\nu +\sum_{i}\mu_{(i)}$ is a constant. 

\emph{Projective symmetry}:
$\X = p \Omega(x) u\partial_u + a^i |x|^2_g \partial_{x^i}$, 
$\Omega(x)= a^i (|x|^2_g)_{x^i}= 2 a^i g_{ij}x^j$, 
$|x|_g^2 = g_{jk}x^j x^k$, 
with $a^i=\const$ being the components of a vector in $\Rnum^n$,
and with $g_{ij}=\const$ being the components of metric on $\Rnum^n$. 
If $G$ has projective weight $q$, namely $\X G = q\Omega G$ off $\Esp$,
then $W= (p+q+1)\Omega$ is a linear function of $x^i$. 
This is a non-trivial variational integrating factor whenever $p+q+1\neq 0$. 

As we will show next, 
besides projective symmetries, there are many non-geometric examples.

\subsection{Classification}

We now state a general classification of non-trivial variational integrating factors
for quasilinear second-order Euler--Lagrange equations in two independent variables 
(namely, $N=n=2$). 

\begin{theorem}\label{thm:classification}
Suppose 
\begin{equation}\label{quasilinear.2ndord.pde}
G(u^{(2)})=A_1(u^{(1)}) u_{tt} + A_2(u^{(1)}) u_{tx} + A_3(u^{(1)}) u_{xx} +A_0(u^{(1)}) =0
\end{equation}
is an Euler--Lagrange PDE, which is quasilinear and translation invariant. 
If it admits a variational integrating factor $W(t,x,u^{(2)})$ that is not a constant, 
then it is equivalent (modulo a point transformation) to one of the following PDEs:
\begin{align} 
& \begin{aligned}
(a)\quad& 
A_1 = \frac{f(u,v) v^2}{u_t}, 
\quad
A_2 = -2\frac{f(u,v) v}{u_t}, 
\quad
A_3 = \frac{f(u,v)}{u_t}, 
\quad
A_0 = 1 \text{ or } 0, 
\\
& 
v=\frac{u_x}{u_t}; 
\quad
f(u,v) \text{ arbitrary }
\end{aligned}
\\
& \begin{aligned}
(b)\quad & 
A_1 = \frac{f_{v}(u,v) g(u) ^2}{v},  
\quad
A_2 = 2\frac{f_{v}(u,v) g(u) }{v}, 
\quad
A_3 = \frac{f_{v}(u,v)}{v}, 
\\&
A_0 = f_{u}(u,v) + f_{v}(u,v) g_{u}(u)  u_t, 
\\
& 
v=g(u)  u_t + u_x; 
\quad
g(u), f(u,v) \text{ arbitrary }
\end{aligned}
\\
& \begin{aligned}
(c)\quad & 
A_1 = \frac{f_{v}(u,v) v^2}{(g_v(u,v)-u_t)(u_x+v u_t)},  
\quad
A_2 = 2\frac{f_{v}(u,v) v}{(g_v(u,v)-u_t)(u_x+v u_t)},  
\\&
A_3 = \frac{f_{v}(u,v)}{(g_v(u,v)-u_t)(u_x+v u_t)},  
\quad
A_0 = f_{u}(u,v) + \frac{f_{v}(u,v) g_{u}(u,v)}{g_v(u,v)-u_t},
\\
& 
v=h(u,u_t,u_x), 
\quad
v u_t + u_x = g(u,v); 
\quad
f(u,v), g(u,v) \text{ arbitrary }
\end{aligned}
\end{align}
The corresponding variational integrating factors are given by 
\begin{align}
& \begin{aligned}
(a)\quad & 
W=F(u,x+h_1(u,v),t- h_2(u,v))
\\&
h_1(u,v) = \int f(u,v)\,dv + k_1(u),
\quad
h_2(u,v) = \int vf(u,v)\,dv + k_2(u),
\end{aligned}
\\
& \begin{aligned}
(b,c)\quad & 
W=F(f(u,v),x-h_1(u,f(u,v)),t- h_2(u,f(u,v))),
\\& 
h_1(u,w)= \int \frac{1}{f^{-1}(u,w)} \,d u + k_1(w),
\quad
h_2(u,w) = \int g(u) h_{1u}(u,w)\,d u + k_2(w),
\\&
f(u,f^{-1}(u,w))=w; 
\quad
k_1(w), k_2(w) \text{ arbitrary }
\end{aligned}
\end{align}
where $F$ is an arbitrary function of its arguments. 
Moreover, 
in each case, 
$W=F$ is a multiplicative recursion operator for symmetries of the corresponding PDE. 
\end{theorem}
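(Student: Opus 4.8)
The plan is to reduce the classification to the geometry of the homogeneous Monge--Ampère equation that a non-constant integrating factor forces, then to solve the determining equation for $W$ along its characteristic direction, and finally to normalize the output by point transformations. By Lemma~\ref{lem:L.Gis2ndordquasilinear} together with translation invariance, I would first take the Lagrangian to be first order and free of $x,t$, so that $L=L(u,u_t,u_x)$ and $G=E_u(L)$ has coefficients $A_1=-L_{u_tu_t}$, $A_2=-2L_{u_tu_x}$, $A_3=-L_{u_xu_x}$, with $A_0=L_u-u_tL_{uu_t}-u_xL_{uu_x}$. Since a non-constant $W$ exists, Proposition~\ref{prop:detDGmatrix.Gis2ndordquasilinear.MAeqn.L} forces equation~\eqref{detDLmatrix}, i.e.\ $4A_1A_3=A_2^2$. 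Geometrically this says that, as a function of $(u_t,u_x)$ for each fixed $u$, the Hessian of $L$ has rank at most one: $L$ is a \emph{developable} Lagrangian, whose gradient map $(u_t,u_x)\mapsto(L_{u_t},L_{u_x})$ collapses onto a one-parameter curve, with $L$ affine along the rulings (the level lines in the $(u_t,u_x)$ plane of a characteristic invariant $v$).

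Next I would exploit the determining equation~\eqref{Wdeteqn}. Its top-order part~\eqref{Weqn.highestord} is the linear system $\bigl(\begin{smallmatrix}2A_1&A_2\\ A_2&2A_3\end{smallmatrix}\bigr)(D_tW,D_xW)^{\mathsf T}=0$ on $\Esp$, whose coefficient matrix is precisely the degenerate matrix~\eqref{detDGmatrix}. Its one-dimensional kernel is the ruling direction, so $W$ must be constant along the characteristic total-derivative vector field singled out by $v$. Integrating this first-order transport equation expresses $W$ as an arbitrary function $F$ of $u$ and two characteristic invariants, and carrying out the characteristic ODEs produces the combinations $x\pm h_1$ and $t-h_2$ appearing in the stated formulas, with $h_1,h_2$ given by the displayed integrals. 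The remaining lower-order components of~\eqref{Wdeteqn}, combined with the Helmholtz/self-adjointness conditions encoding $G=E_u(L)$ (equivalently, the integrability requirement that $A_1,A_2,A_3$ be second $u_t,u_x$-derivatives of a single $L$, plus the matching of $A_0$), then constrain the non-principal data of $L$: the $u$-dependence and the source term $A_0$.

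The case split (a)/(b)/(c) is where the real effort lies, and I expect it to be the main obstacle. It mirrors the trichotomy of developable surfaces according to how the pencil of rulings sits relative to the translation symmetry: rulings through a common point, with invariant slope $v=u_x/u_t$, giving case~(a); parallel rulings with $u$-dependent slope, giving $v=g(u)u_t+u_x$ in case~(b); and the general one-parameter family cut out implicitly by $vu_t+u_x=g(u,v)$ in case~(c). In each class I would integrate the rank-one Hessian structure to recover $A_1,A_2,A_3$ in terms of a single free profile $f$ (and, in (b),(c), an auxiliary $g$), feed this into the Helmholtz conditions to fix $A_0$, and use the residual freedom of point transformations in $(t,x,u)$ to normalize $v$ and absorb reparametrizations $v\mapsto$ function of $v$, reducing each branch to the canonical representative listed. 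The delicate steps are proving that these three geometric configurations exhaust all possibilities up to point equivalence (completeness of the list) and checking that the branches are genuinely distinct rather than overlapping degenerations of one another.

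Finally, since each resulting $G$ is Euler--Lagrange and admits the non-constant variational integrating factor $W=F$, Proposition~\ref{prop:varIF.symm-to-symm} immediately yields that $W=F$ is a multiplicative recursion operator for the symmetries of the corresponding PDE, which completes the proof.
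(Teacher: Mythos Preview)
Your plan is sound and would work, but it is a genuinely different route from the paper's. The paper does not invoke the developable-surface geometry at all; instead it writes out the full self-adjointness conditions $G'=G'^*$ and $(WG)'=(WG)'^*$ via higher Euler operators, splits them with respect to $u_{tt},u_{tx},u_{xx}$, and feeds the resulting overdetermined nonlinear system on $(A_0,A_1,A_2,A_3,W)$ into a differential-algebraic integrability analysis (Maple \texttt{rifsimp}). That analysis produces the case split not through the cone/cylinder/tangent-developable trichotomy you describe, but through the algebraic dichotomy $u_tA_2+2u_xA_3=0$ versus $\neq 0$; the second branch is then subdivided by solving a Burgers-type equation $A_{4,u_t}-A_4A_{4,u_x}=0$ for $A_4:=A_2/(2A_3)$, whose hodograph solution is either $A_4=g(u)$ or the implicit relation $u_tA_4+u_x=g(u,A_4)$. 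After that, each branch is integrated by explicit changes of jet variables. Your three geometric configurations do line up with these three cases (your ``cone'' is exactly $u_tA_2+2u_xA_3=0$, and your cylinder/tangent-developable split matches the two hodograph alternatives for $A_4$), so the trichotomy you propose is correct and exhaustive for the right reason.

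What your approach buys is a conceptual explanation of why exactly three families appear and why they are parametrized by $v=u_x/u_t$, $v=g(u)u_t+u_x$, and the implicit $vu_t+u_x=g(u,v)$: these are precisely the ruling invariants of the rank-one Hessian. What the paper's approach buys is that the lower-order compatibility --- the part you flag as ``the main obstacle'' --- is handled mechanically by \texttt{rifsimp}, so one never has to argue by hand that the zeroth-order piece of \eqref{Wdeteqn} and the Helmholtz conditions on $A_0$ close up without imposing extra constraints on $f,g$. In your scheme that closure must be verified directly, and you should be aware that it is where most of the actual work sits; the top-order transport equation for $W$ is comparatively easy. One small caution: the only point transformation the paper actually uses is the swap $t\leftrightarrow x$ to arrange $A_1\not\equiv 0$; the normalizations you plan to perform via ``residual point transformations'' are in the paper achieved instead by absorbing free functions (e.g.\ scaling $g(u)$ into $f$ in case~(a)), so be careful not to quotient by more equivalence than the statement allows.
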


The proof will be given in the next subsection. 

\begin{remark}\hfil
\begin{itemize}
\item[1.]
A Lagrangian for the PDEs (a)--(c) can be obtained from either the homotopy formula \eqref{L.homotopy} or integration of the equations
$A_1 = - L_{u_t u_t}$, $A_2 = -2 L_{u_t u_x}$, $A_3 = - L_{u_x u_x}$, 
$A_0=L_u - L_{u u_t}u_t - L_{u u_x}u_x$. 
\item[2.]
Each PDE (a)--(c) possesses a sequence of contact symmetries 
$\hat\X^{(l)} = W^l P\partial_u$, $l=0,1,2,\ldots$, 
starting from the translation symmetry given by $P=a u_t + b u_x$, 
where $a,b$ are arbitrary constants. 
\item[3.]
All of the PDEs (a)--(c) are of parabolic type, 
since their coefficients satisfy the algebraic relation $A_2{}^2 - 4 A_1 A_3 =0$. 
\item[4.]
None of these PDEs can be mapped into the linear parabolic PDE $u_{tt}\pm 2u_{tx} +u_{xx}=0$ 
by a contact transformation. 
\end{itemize}
\end{remark}

To explain the fourth remark, 
consider a contact transformation \cite{BCA-book} on the space $(t,x,u,u_t,u_x)$. 
We apply the transformation to the Lagrangian density 
$L\,dt\,dx=\tfrac{1}{2}(u_t\pm u_x)^2dt\,dx$
for $u_{tt}\pm 2u_{tx} +u_{xx}=0$, 
and then we impose the condition that the form of the resulting Euler--Lagrange PDE is quasilinear and translation invariant. 
This shows that the contact transformation has the form 
$t^*=t + a(u)$, $x^*=x + b(u)$, $u^*=c(u)$, $(u_t)^*=c'(u)u_t/(1+a'(u)u_t +b'(u)u_x)$, 
$(u_x)^*=c'(u)u_t/(1+a'(u)u_t +b'(u)u_x)$,
up to a point transformation, 
where $a$, $b$, $c$ are arbitrary functions of $u$. 
The PDEs ($a$)--($c$) in the classification, however, 
involve the arbitrary function $f(u,v)$ of two variables. 
Therefore, in general, these PDEs cannot be mapped into $u_{tt}\pm 2u_{tx} +u_{xx}=0$
by a contact transformation. 

The PDE in class ($a$) is equivalent to (up to a variational multiplier)
\begin{equation}
u_t v_x - u_x v_t + A_0 u_t/f(u,v) =0 .
\end{equation}
An example of such a PDE is closely related to the Born--Infeld equation \cite{BorInf} in 1+1 dimensions, 
$(a^2+u_x{}^2) u_{tt} -2 u_tu_x u_{tx} -(a^2-u_t{}^2) u_{xx}=0$,
which arises from the Lagrangian $L=\sqrt{u_t{}^2-u_x{}^2 - a^2}$. 
Its dispersionless limit is given by 
\begin{equation}
u_x{}^2 u_{tt} -2 u_tu_x u_{tx} + u_t{}^2 u_{xx}=0, 
\end{equation}
which is equivalent to the Euler--Lagrange equation obtained from $L$ with $a=0$. 
This dispersionless Euler--Lagrange PDE belongs to class ($a$):
$A_1 = u_x{}^2/L^3|_{a=0}$, 
$A_2 = -2u_t u_x/L^3|_{a=0}$, 
$A_3 = u_t{}^2/L^3|_{a=0}$, 
$A_0=0$, 
$f(u,v)=1/\sqrt{1-v^2}^3$,
where $L|_{a=0}=\sqrt{u_t{}^2-u_x{}^2}$ is the Lagrangian.

\subsection{Proof of classification}

A necessary condition for existence of a variational integrating factor of 
a general second-order PDE is given by 
the determinant condition \eqref{detDGmatrix}. 
When the PDE has a quasilinear form \eqref{quasilinear.2ndord.pde},
this determinant becomes 
\begin{equation}\label{det.cond}
A_2{}^2 - 4 A_1 A_3 =0 . 
\end{equation}

The classification now proceeds by setting up the necessary and sufficient conditions
\begin{equation}\label{sys1}
G'=G'{}^*,
\quad
(GW)'=(GW)'{}^*,
\quad
W\neq\const . 
\end{equation}
For the PDE \eqref{quasilinear.2ndord.pde} to be of second order, 
we must have $A_1A_3\neq 0$, since otherwise $A_1=A_3=0$ implies $A_2=0$ 
from the condition \eqref{det.cond}. 
Then without loss of generality, we may assume 
\begin{equation}\label{sys2}
A_1 \not\equiv 0 
\end{equation}
(after permutation of the variables $(t,x)$ if necessary). 
We also will assume 
\begin{equation}\label{cond1}
(A_0{}_{u_x})^2+(A_1{}_{u_x})^2+A_2{}^2+A_3{}^2\not\equiv 0
\end{equation}
which ensures that the PDE \eqref{quasilinear.2ndord.pde} does not reduce to an ODE
due to having no dependence on $x$-derivatives of $u$. 

The conditions \eqref{sys1} can be expressed in terms of higher Euler operators 
$E_u^{(i)}$, $E_u^{(ij)}$ \cite{Olv-book,Anc-review}:
\begin{subequations}\label{GWsys}
\begin{align}
& G_u = E_u(G),
\quad
G_{u_t} = -E_u^{(t)}(G),
\quad
G_{u_x} = -E_u^{(x)}(G),
\\
& (GW)_u = E_u(GW),
\quad
(GW)_{u_t} = -E_u^{(t)}(GW),
\quad
(GW)_{u_x} = -E_u^{(x)}(GW),
\end{align}
\end{subequations}
and
\begin{subequations}\label{empty}
\begin{align}
& 
G_{u_{tt}} = E_u^{(t,t)}(G),
\quad
G_{u_{tx}} = E_u^{(t,x)}(G),
\quad
G_{u_{xx}} = E_u^{(x,x)}(G),
\\
& 
(GW)_{u_{tt}} = E_u^{(t,t)}(GW),
\quad
(GW)_{u_{tx}} = E_u^{(t,x)}(GW),
\quad
(GW)_{u_{xx}} = E_u^{(x,x)}(GW).
\end{align}
\end{subequations}
Equations \eqref{empty} reduce to identities. 
The remaining equations \eqref{GWsys} 
can be split with respect to the jet variables $u_{tt}$, $u_{tx}$, $u_{xx}$. 
This splitting, together with equation \eqref{sys2} and condition \eqref{cond1},
yields a nonlinear overdetermined system on the functions $A_1$, $A_2$, $A_3$, $A_0$, and $W$,
subject to conditions \eqref{sys2} and \eqref{cond1}. 

We solve this system by using two main steps. 
First, we use Maple 'rifsimp' to do an integrability analysis.  
This yields two main cases that are separated by whether 
$u_tA_2+2u_xA_3$ is zero or non-zero. 
Next, in each case we integrate the resulting system. 

\emph{Case $(1)$}:\quad$u_tA_2+2u_xA_3=0$

In this case, the nonlinear overdetermined system consists of 
\begin{equation}\label{sys.case1}
\begin{aligned}
& 
A_0{}_{u_t}=0,
\quad 
A_0{}_{u_x}=0,
\quad
u_tA_2+2u_xA_3=0,
\quad
u_tA_3{}_{u_t}+u_xA_3{}_{u_x}+A_3=0,
\\
& 
u_tW_{u_t}+u_xW_{u_x}=0,
\quad 
(u_xW_t-u_tW_x)A_3 +u_tA_0W_{u_x}=0,
\quad
A_1 = A_2{}^2/(4A_3) . 
\end{aligned}
\end{equation}
The solution of the linear PDEs for $A_0$, $A_2$, $A_3$ is given by 
\begin{equation}\label{case1.A0A3A2}
A_0 = g(u),
\quad
A_2 = -2\frac{f(u,v) v}{u_t}, 
\quad
A_3 = \frac{f(u,v)}{u_t},
\end{equation}
where $v=u_x/u_t$, 
and where $f(u,v)$, $g(u)$ are arbitrary functions. 
Hence
\begin{equation}\label{case1.A1}
A_1 = \frac{f(u,v)v^2}{u_t} .
\end{equation}
From the resulting form of the quasilinear PDE \eqref{quasilinear.2ndord.pde},
we see that if $g(u)$ is not zero then it can be absorbed in $f(u,v)$, 
and hence we can take 
\begin{equation}\label{case1.g}
g(u)=1 \text{ or } 0. 
\end{equation}
Substituting expressions \eqref{case1.A0A3A2} and \eqref{case1.g}
into the remaining two equations in \eqref{sys.case1}, 
we obtain the solution for $W$.
This yields case $(a)$ in Theorem~\ref{thm:classification}. 

\emph{Case $(2)$}:\quad$u_tA_2+2u_xA_3 \neq0$

In this case, the nonlinear overdetermined system consists of
\begin{equation}\label{sys.case2.Aeqns}
\begin{aligned}
& A_1 = (u_x/u_t)^2 A_3 , 
\quad
2A_3{}_{u_t}-A_2{}_{u_x}=0,
\quad
2A_3^2A_2{}_{u_t}+A_2^2A_3{}_{u_x}-2A_2A_3A_2{}_{u_x}=0,
\\
& A_2{}_u(u_tA_2+2u_xA_3)^2-2u_tA_2^2A_0{}_{u_x}-8u_xA_3^2A_0{}_{u_t}=0,
\\
& A_3{}_u(u_tA_2+2u_xA_3)^2-4A_3((u_tA_2+u_xA_3)A_0{}_{u_x}-u_tA_3A_0{}_{u_t})=0,
\\
& 4A_3^2A_0{}_{u_tu_t}-4A_2A_3^2A_0{}_{u_tu_x}+A_2^2A_3A_0{}_{u_xu_x}
-2(A_2A_0{}_{u_x}-2A_3A_0{}_{u_t})(A_2A_3{}_{u_x}-A_3A_2{}_{u_x})=0,
\end{aligned}
\end{equation}
and 
\begin{equation}\label{sys.case2.Weqns}
2A_3W_{u_t}-A_2W_{u_x}=0,\quad
A_2W_t+u_tA_2W_u+2u_xA_3W_u+2A_3W_x-2A_0W_{u_x}=0. 
\end{equation}
To begin, 
we can solve for $A_2$ and $A_3$ 
in the first two equations in \eqref{sys.case2.Aeqns}
by introducing 
\begin{equation}\label{sys2.case2.A4rel}
A_4=\frac{A_2}{2A_3} ,
\end{equation}
whereby these two equations become
\begin{align}
& A_3{}_{u_t}-A_4A_3{}_{u_x}=0,
\label{sys.case2.A3eqn}
\\
& A_4{}_{u_t}-A_4A_4{}_{u_x}=0.
\label{sys.case2.A4eqn}
\end{align}
Equation \eqref{sys.case2.A4eqn} is a first-order nonlinear PDE for $A_4$. 
Its general solution is implicitly determined by
\begin{align*}
\Phi(u,A_4,u_x+u_tA_4)=0
\end{align*}
where $\Phi$ is an arbitrary function. 
Next, the solution of equation \eqref{sys.case2.A3eqn} depends on whether 
$\Phi$ has essential dependence on its third argument or not. 

This leads to a case splitting. 
When $\Phi$ has no essential dependence on $u_x+u_tA_4$,
then 
\begin{equation}\label{case2b.A4}
A_4=g(u) , 
\end{equation}
and otherwise when $\Phi$ does have essential dependence on $u_x+u_tA_4$,
then 
\begin{equation}\label{case2c.A4}
u_tA_4+u_x=g(u,A_4) , 
\end{equation} 
where, in both cases, $g$ is an arbitrary function.

\emph{Case 2$(i)$}: 
In this subcase we can easily solve for $A_3$ from the linear first-order PDE \eqref{sys.case2.A3eqn}, 
which gives 
\begin{equation}\label{case2b.A3}
A_3=F(u,g(u)u_t+u_x)
\end{equation}
where $F$ is an arbitrary function.
Hence
\begin{equation}\label{case2b.A1}
A_1=F(u,g(u)u_t+u_x)(u_x/u_t)^2 . 
\end{equation}
Then, from relation \eqref{sys2.case2.A4rel}, we obtain
\begin{equation}\label{case2b.A2}
A_2=2 g(u)F . 
\end{equation}
To continue with this subcase, 
we now solve the remaining three equations in \eqref{sys.case2.Aeqns} 
for $A_0$ through a change of variables
\begin{align*}
v=u,\quad
v_1=u_t,\quad
v_2=g(u)u_t+u_x.
\end{align*}
The result is 
\begin{equation}\label{case2b.A0}
A_0=v_1 g' f_{v_2} +f_v
\end{equation}
where $f(v,v_2)$ is defined by 
\begin{equation*}
f_{v_2}=v_2F(v,v_2) .
\end{equation*}

Finally, 
substituting expressions \eqref{case2b.A0}, \eqref{case2b.A2}, \eqref{case2b.A1}, \eqref{case2b.A3}, \eqref{case2b.A4} 
into equations \eqref{sys.case2.Weqns}, 
we get 
\begin{align*}
W_{v_1}=0,
\quad 
v_2f_{v}W_{v_2} -f_{v_2}(W_x+g(v)W_t+v_2W_v) =0.
\end{align*}
We can solve this pair of linear first-order PDEs for $W$ 
after another change of variables
\begin{align*}
w=v,\quad
w_1=v_1,\quad
w_2=f.
\end{align*}
The solution yields case $(b)$ in Theorem~\ref{thm:classification}. 

\emph{Case 2$(ii)$}: 
In this subcase the linear first-order PDE \eqref{sys.case2.A3eqn}
for $A_3$ yields 
\begin{equation}\label{case2c.A3}
A_3=\frac{F(u,A_4)}{g_{A_4}-u_t}
\end{equation}
where $F$ is an arbitrary function.
Hence
\begin{equation}\label{case2c.A1}
A_1=\frac{F(u,A_4)u_x^2}{(g_{A_4}-u_t)u_t^2} . 
\end{equation}
Then, from relation \eqref{sys2.case2.A4rel}, we obtain
\begin{equation}\label{case2c.A2}
A_2=2A_4 \frac{F(u,A_4)}{g^{\mathstrut}_ {A_4}-u_t} .
\end{equation}
Continuing similarly to the previous subcase, 
we use a change of variables
\begin{align*}
v=u,\quad
v_1=A_4,\quad
v_2=u_x,
\end{align*}
which allows solving the remaining three equation in \eqref{sys.case2.Aeqns} to obtain
\begin{equation}\label{case2c.A0}
A_0=f_v-\frac{g_vf_{v1}v_1}{v_2+v_1g_{v_1}-g}
\end{equation}
where $f(v,v_1)$ is defined by
\begin{equation*}
f_{v_1}=g(v,v_1)F(v,v_1) .
\end{equation*}

Now, substituting expressions \eqref{case2c.A0}, \eqref{case2c.A2}, \eqref{case2c.A1}, 
\eqref{case2c.A3}, \eqref{case2c.A4} 
into equations \eqref{sys.case2.Weqns}, 
we have 
\begin{align*}
W_{v_2}=0,
\quad 
gf_{v_1}W_v+f_{v_1}W_x+v_1f_{v_1}W_t-gf_{v_1}W_{v_1}=0.
\end{align*}
We can solve this pair of linear first-order PDEs for $W$ 
through a further change of variables
\begin{align*}
w=v,\quad
w_1=f(v,v_1),\quad
w_2=v_2.
\end{align*}
The solution yields case $(c)$ in Theorem~\ref{thm:classification}. 

This completes the proof.

\section{Generalizations and Main Results}\label{sec:mainresults}

The preceding results can be generalized in an interesting way 
to yield non-multiplicative operators. 

\begin{theorem}\label{thm:op.symm-to-adjsymm}
Suppose $G=0$ possesses an adjoint-symmetry $Q$ that is not a multiplier, namely
\begin{align*}
G'{}^*(Q)=R_Q(G),
\quad
E_u(QG)\not\equiv 0
\end{align*}
holds off of $\Esp$,
where $R_Q$ is non-zero linear differential operator in total derivatives whose coefficients are non-singular differential functions on $\Esp$. 
Then 
\begin{align}\label{S.symm-to-adjsymm}
\S := R_Q^* + Q'
\end{align}
is a linear differential operator in total derivatives that maps symmetries to adjoint-symmetries. 
Moreover, when an inverse $\S^{-1}$ exists, 
it maps adjoint-symmetries to symmetries. 
\end{theorem}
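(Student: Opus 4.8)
The plan is to prove both assertions at once by establishing a single off-$\Esp$ operator identity,
\[
G'{}^*\circ \S \;=\; \S^*\circ G' \;+\; (\text{an operator of the form } R(G)),
\qquad \S^*=R_Q+Q'{}^* ,
\]
where ``of the form $R(G)$'' means a total-derivative combination of $G$, so that it vanishes on $\Esp$ together with all its derivatives. Granting this identity, the first claim is immediate: if $P$ is a symmetry then $G'(P)=R_P(G)$ is of the form $R(G)$, hence so is $\S^*(G'(P))$, and therefore $G'{}^*(\S(P))|_\Esp=0$, i.e.\ $\S(P)$ is an adjoint-symmetry. The second claim then follows dually: if $\tilde Q$ is an adjoint-symmetry and $P:=\S^{-1}(\tilde Q)$, then $G'{}^*(\S(P))=G'{}^*(\tilde Q)$ is of the form $R(G)$, so the identity forces $\S^*(G'(P))$ to be of the form $R(G)$ as well; applying $(\S^*)^{-1}=(\S^{-1})^*$ then yields that $G'(P)$ is of the form $R(G)$, i.e.\ $G'(P)|_\Esp=0$, so $P$ is a symmetry.

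To produce the identity I would assemble it from two ingredients that each isolate the \emph{same} auxiliary operator, which then cancels. Write $\delta_h$ for the evolutionary (Frechet) derivation in a direction $h$, acting on a differential function by $\delta_h f=f'(h)$ and on a total-derivative operator by differentiating its coefficients; recall $\delta_h$ commutes with total derivatives. Define the auxiliary operator $\mathcal C$ by $\mathcal C(h):=(\delta_h G')^*(Q)$, built from the (symmetric) second Frechet derivative of $G$. The first ingredient comes from differentiating the adjoint-symmetry relation $G'{}^*(Q)=R_Q(G)$ along an arbitrary $h$ and using the product rule together with $(\delta_h G'{}^*)=(\delta_h G')^*$: modulo operators of the form $R(G)$ this gives $\mathcal C+G'{}^*\circ Q'=R_Q\circ G'$, that is, $G'{}^*\circ Q'\equiv R_Q\circ G'-\mathcal C$.

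The second ingredient is obtained by integrating by parts. Using the defining adjoints of $G'$ and of $R_Q$ together with $G'{}^*(Q)=R_Q(G)$, I would first show that $Q\,G'(P)=G\,R_Q^*(P)$ modulo a total divergence, for every $P$. Applying the Euler--Lagrange operator $E_u$ (which annihilates divergences), expanding each product by $E_u(fg)=f'{}^*(g)+g'{}^*(f)$, and then expanding $(G'(P))'{}^*$ via the composition rule $(G'(P))'=\delta_P G'+G'\circ P'$, yields, again modulo operators of the form $R(G)$, the relation $G'{}^*\circ R_Q^*\equiv Q'{}^*\circ G'+\mathcal C$ with the \emph{same} $\mathcal C$. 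Adding the two ingredients, the $\mathcal C$ terms cancel and I obtain $G'{}^*\circ(Q'+R_Q^*)\equiv(R_Q+Q'{}^*)\circ G'$, which is exactly the desired identity for $\S=R_Q^*+Q'$.

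The main obstacle will be the bookkeeping needed to verify that the auxiliary operator produced in the two ingredients is literally the same $\mathcal C$, so that the cancellation is exact; this rests on the symmetry of the second Frechet derivative of $G$ and on tracking carefully which correction terms are of the form $R(G)$ under the adjoint and $E_u$ operations. A second, more conceptual, delicate point concerns the inverse in the final assertion: the argument needs $(\S^{-1})^*$ to be a differential operator in total derivatives whose coefficients are non-singular on $\Esp$, so that it maps functions of the form $R(G)$ back into that class. Since inverses of differential operators are generically nonlocal, the phrase ``$\S^{-1}$ exists'' must be read in precisely this sense, and this is where the regularity hypothesis on $\S$ is genuinely used.
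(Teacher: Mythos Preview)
Your approach is correct and lands on the same core identity $G'{}^*(\S(P))|_\Esp = \S^*(G'(P))|_\Esp$ as the paper, but by a genuinely different route. The paper observes in one stroke that $\S^*(G) = R_Q(G) + Q'{}^*(G) = G'{}^*(Q) + Q'{}^*(G) = E_u(QG)$ lies in the image of the Euler operator, so its Frechet derivative is self-adjoint: $(\S^*(G))' = (\S^*(G))'{}^*$. Writing $\S^* = \sum_J W^J D_J$, expanding both sides of this single operator equation applied to an arbitrary $A$, and restricting to $\Esp$ immediately gives $\S^*(G'(A))|_\Esp = G'{}^*(\S(A))|_\Esp$, from which the theorem follows by taking $A=P$ a symmetry. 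Your decomposition into two ingredients---differentiating the adjoint-symmetry relation, then a separate integration-by-parts step invoking $E_u$ on the divergence identity $QG'(P)\equiv G\,R_Q^*(P)$---is more hands-on and requires explicitly cancelling the auxiliary operator $\mathcal C$; the paper's argument packages both of your ingredients into the Helmholtz self-adjointness condition and never needs to isolate $\mathcal C$. What your route buys is an explicit off-$\Esp$ operator identity with visible remainder terms, whereas the paper restricts to $\Esp$ early and thereby suppresses that bookkeeping. Your caveat about the meaning of ``$\S^{-1}$ exists'' is apt; the paper treats that clause formally and does not elaborate either.
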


\begin{proof}
By the product rule for $E_u$, we have 
$E_u(QG) = G'{}^*(Q) + Q'{}^*(G) = S^*(G)$,
which is assumed to be non-trivial. 
Since $\S^*(G)$ is Euler--Lagrange, it is self-adjoint \cite{Olv-book,Anc-review}
\begin{equation}\label{S*G}
(\S^*(G))' = (\S^*(G))'{}^* . 
\end{equation}
Consider this operator equation applied to an arbitrary differential function $A$. 
By writing $\S^*=\sum_J W^J D_J$, we see that the l.h.s.\ is given by 
\begin{equation}\label{lhs.S*G}
(\S^*(G))'A=  \sum_J W^J D_J(G'(A))+ (W^J)'(A) D_JG . 
\end{equation}
The r.h.s.\ is the adjoint:
\begin{equation}\label{rhs.S*G}
(\S^*(G))'{}^*A = \sum_J (-1)^{|J|} G'{}^*(D_J(A W^J)) +(W^J)'{}^*(AD_JG) . 
\end{equation}
Hence, on $\Esp$, 
we have
\begin{equation}\label{S*Geqn}
\begin{aligned}
& \big( (\S^*(G))'A \big)\big|_\Esp =  \sum_J \big(W^J D_J(G'(A))\big)\big|_\Esp = \S^*(G'(A))|_\Esp 
\\
& = \big( (\S^*(G))'{}^*A \big)\big|_\Esp = \sum_J (-1)^{|J|} \big(G'{}^*(D_J(A W^J))\big)\big|_\Esp = G'{}^*(\S(A))|_\Esp
\end{aligned}
\end{equation}
using $\S=\sum_J (W^J D_J){}^* = \sum_J (-1)^{|J|} D_J W^J$. 
Rearranging this equation \eqref{S*Geqn}, 
and putting  $A=P$, 
we get 
\begin{align*}
G'{}^*(\S(P))|_\Esp = \S^*(G'(P))|_\Esp=0
\end{align*}
when $P$ is the characteristic function of a symmetry of $G=0$. 
\end{proof}

The formula \eqref{S.symm-to-adjsymm} uses adjoint-symmetries $Q$ 
which are lifted off of the solution space $\Esp$ of $G=0$. 
Recall, an adjoint-symmetry is trivial if it vanishes on $\Esp$,
and two adjoint-symmetries that differ by a trivial adjoint-symmetry 
are said to be equivalent. 
Consider a trivial adjoint-symmetry $Q=\Dop(G)$ of $G=0$,
where $\Dop=\sum_{J} C^J D_J$ is any linear differential operator in total derivatives. 
We have 
\begin{align*}
Q'=\Dop'(G)+\Dop G',
\quad
R_Q = G'{}^*\Dop ,
\end{align*}
with $\Dop'= \sum_{J} (C^J)' D_J$ denoting the Frechet derivative. 
This yields an operator 
\begin{equation}\label{trivS.op}
\S= \Dop'(G)+(\Dop+\Dop^*)G' . 
\end{equation}
Note that, on $\Esp$, the first term $\Dop'(G)$ is trivially zero 
while the second term vanishes when it is applied to any symmetry characteristic $P$,
since $G'(P)|_\Esp =0$. 

Hence, we will regard any two operators \eqref{S.symm-to-adjsymm} 
as being equivalent if they differ by a trivial operator of the form \eqref{trivS.op}. 

It is now interesting to ask when an adjoint-symmetry produced by the operator \eqref{S.symm-to-adjsymm} 
is a multiplier for a conservation law. 

\begin{proposition}\label{prop:condition.Qismultiplier}
For a symmetry $\hat\X_P=P\partial_u$ of $G=0$, 
the adjoint-symmetry $\S(P)$ is a multiplier 
yielding a conservation law of $G=0$ if and only if 
\begin{align}\label{S.multiplier.condition}
(\pr\hat\X_P\S^*+\S^*R_{P}+P'{}^*S^*)G=0
\end{align}
holds off of $\Esp$. 
\end{proposition}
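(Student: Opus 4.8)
The plan is to reduce the multiplier condition to the determining equation $E_u(\S(P)G)\equiv 0$ (the multiplier criterion of the preliminaries) and then to massage $E_u(\S(P)G)$, off of $\Esp$, into exactly the operator $(\pr\hat\X_P\S^*+\S^*R_{P}+P'{}^*\S^*)G$. The opening move is to trade the product $\S(P)G$ for a prolongation of the product $QG$. Using the definition $\S=R_Q^*+Q'$ to write $Q'(P)=\S(P)-R_Q^*(P)$, together with the symmetry relation $G'(P)=R_P(G)$ from \eqref{symm.deteqn.offsolns}, the Frechet product rule gives $\pr\hat\X_P(QG)=(QG)'(P)=Q'(P)G+QG'(P)=\S(P)G-R_Q^*(P)G+QR_P(G)$, and hence $\S(P)G=\pr\hat\X_P(QG)+R_Q^*(P)G-QR_P(G)$.

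Next I would apply $E_u$ to this identity. The first term is handled by the standard commutation identity between the Euler operator and a prolonged evolutionary field \cite{Olv-book,Anc-review}, namely $E_u(\pr\hat\X_P f)=\pr\hat\X_P(E_u f)+P'{}^*(E_u f)$, taken with $f=QG$. Since the proof of Theorem~\ref{thm:op.symm-to-adjsymm} already established $E_u(QG)=\S^*(G)$, this yields $E_u(\pr\hat\X_P(QG))=\pr\hat\X_P(\S^*(G))+P'{}^*(\S^*(G))$. Expanding $\pr\hat\X_P(\S^*(G))$ by Leibniz over the coefficients of the operator $\S^*$ and commuting $\pr\hat\X_P$ through the total derivatives (again using $G'(P)=R_P(G)$) produces $(\pr\hat\X_P\S^*+\S^*R_{P})(G)$, so that the first term already supplies the three operator contributions $(\pr\hat\X_P\S^*+\S^*R_{P}+P'{}^*\S^*)G$.

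The crux is then to show that the remaining two terms cancel, that is, $E_u(R_Q^*(P)G)=E_u(QR_P(G))$ off of $\Esp$; I expect this to be the main obstacle, since neither term vanishes on its own. The key is that $E_u$ annihilates total divergences, combined with two bilinear concomitant identities: the adjoint relation for $G'$ gives $QG'(P)-PG'{}^*(Q)=D_{x^i}(\cdots)$, and the adjoint relation for the operator $R_Q$ gives $PR_Q(G)-R_Q^*(P)G=D_{x^i}(\cdots)$. Chaining these through $R_P(G)=G'(P)$ and $G'{}^*(Q)=R_Q(G)$ yields $E_u(QR_P(G))=E_u(QG'(P))=E_u(PG'{}^*(Q))=E_u(PR_Q(G))=E_u(R_Q^*(P)G)$, so the leftover cancels identically. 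Combining everything gives $E_u(\S(P)G)=(\pr\hat\X_P\S^*+\S^*R_{P}+P'{}^*\S^*)G$ off of $\Esp$, and since $\S(P)$ is a multiplier precisely when $E_u(\S(P)G)\equiv 0$, the stated necessary-and-sufficient condition follows.
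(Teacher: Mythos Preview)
Your argument is correct and takes a genuinely different route from the paper. The paper applies the Euler product rule directly to $\S(P)G$, writing $E_u(\S(P)G)=G'{}^*(\S(P))+\S(P)'{}^*(G)$, and then computes each piece by unpacking the coefficients $W^J$ of $\S^*$: the first piece is handled via the self-adjointness relation $(\S^*(G))'=(\S^*(G))'{}^*$ established in Theorem~\ref{thm:op.symm-to-adjsymm}, and the second via an explicit formula for $\S(P)'{}^*$; the two computations produce matching residual terms $\sum_J (W^J)'{}^*(PD_JG)$ that cancel. Your approach instead trades $\S(P)G$ for $\pr\hat\X_P(QG)$ plus correction terms, then invokes the high-level commutation identity $E_u\circ\pr\hat\X_P=(\pr\hat\X_P+P'{}^*)\circ E_u$ together with the already-known fact $E_u(QG)=\S^*(G)$; the corrections are dispatched by chaining two integration-by-parts identities through the defining relations $G'(P)=R_P(G)$ and $G'{}^*(Q)=R_Q(G)$. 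Your method is more structural and avoids coefficient-level manipulation of $\S^*$, at the cost of invoking the Olver-style commutation formula for $E_u$ as a black box; the paper's method is more hands-on but self-contained, reusing only the operator identity \eqref{S*Geqn} from its own Theorem~\ref{thm:op.symm-to-adjsymm}.
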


\begin{proof}
$\S(P)$ is a multiplier if and if it satisfies 
\begin{align}\label{EL.SP}
0= E_u(\S(P)G)=G'{}^*(\S(P))+\S(P)'{}^*(G)
\end{align}
off of $\Esp$. 
We proceed to simplify the r.h.s.\ of \eqref{EL.SP}
by writing $\S^*=\sum_J W^J D_J$. 

First, 
from equations \eqref{S*G} to \eqref{rhs.S*G} 
in the proof of Theorem~\ref{thm:op.symm-to-adjsymm}, 
we have 
\begin{align*}
\S^*(G'(P))+ \sum_J (W^J)'(P) D_JG= G'{}^*(\S(P))+ \sum_J (W^J)'{}^*(PD_JG) , 
\end{align*}
which can be rewritten as
\begin{equation}\label{rhs.term1}
G'{}^*(\S(P))=\S^*(G'(P))+\pr\X_{P}(\S^*)(G)  -\sum_J (W^J)'{}^*(PD_JG) . 
\end{equation}
Next, we note 
\begin{align*}
\S(P)'
=\big( \sum_J(-1)^{|J|}D_J(W^JP)\big)'
&=\sum_J(-1)^{|J|}D_J(W^JP'+P(W^J)')\\
&=\S P'+\sum_J(-1)^{|J|}D_JP(W^J)' , 
\end{align*}
and so 
\begin{align*}
\S(P)'{}^*
=P'{}^*\S^*+\sum_J (W^J)'{}^*(P)D_J , 
\end{align*}
which yields
\begin{equation}\label{rhs.term2}
\S(P)'{}^*(G)
=P'{}^*(\S^*(G)) +\sum_J (W^J)'{}^*(P) D_JG . 
\end{equation}
Plugging expressions \eqref{rhs.term1} and \eqref{rhs.term2} 
into the r.h.s.\ of \eqref{EL.SP}, 
we obtain 
\begin{align*}
0 =\S^*(G'(P))+P'{}^*\S^*(G) + \pr\X_{P}(\S^*)(G) . 
\end{align*}
Then plugging in $G'(P) =R_P(G)$ yields \eqref{S.multiplier.condition}. 
\end{proof}

\subsection{Evolution equations and pre-symplectic operators}

To explore the content of 
Proposition~\ref{prop:condition.Qismultiplier} and Theorem~\ref{thm:op.symm-to-adjsymm} further, 
it will be useful to consider the case when $G=0$ is an evolution PDE:
\begin{equation}\label{evolPDE}
G(x,u^{(N)})= u_t - g(x,u^{(N)}) =0
\end{equation}
where $t$ is the time variable and $x$ here denotes the spatial variables. 

Symmetries $\hat\X_P = P\partial_u$ of an evolution equation \eqref{evolPDE}
are determined by 
\begin{equation}\label{evolPDE:symm}
(D_t P - g'(P))|_\Esp =0 . 
\end{equation}
As is well known, 
off of the solution space $\Esp$, 
$P$ obeys 
\begin{equation}
g'(P) - P'(g) =P_t , 
\end{equation}
which implies 
\begin{equation}\label{evolPDE:R_P}
R_P = P' . 
\end{equation}

Adjoint-symmetries $Q$ of an evolution equation \eqref{evolPDE}
are determined by \label{evolPDE:adjsymm}
\begin{equation}
(D_t Q + g'{}^*(Q))|_\Esp =0 , 
\end{equation}
which is the adjoint of the symmetry equation \eqref{evolPDE:symm}. 
Off of the solution space $\Esp$, 
$Q$ similarly obeys 
\begin{equation}
Q'(g) + g'{}^*(Q) = -Q_t ,
\end{equation}
which implies 
\begin{equation}\label{evolPDE:R_Q}
R_Q = -Q'.
\end{equation}
It is well known that an adjoint-symmetry $Q$ is a multiplier 
for a conservation law of $u_t = g$ 
if and only if it satisfies 
\begin{equation}\label{evolPDE:multiplier}
Q' = Q'{}^*
\end{equation}
off of $\Esp$. 
This operator condition is equivalent to the Helmholtz conditions 
which correspond to $Q = E_u(H)$ being a variational derivative of 
some differential function $H$. 

Note that, without loss of generality, 
we can eliminate $t$-derivatives of $u$ from symmetries and adjoint-symmetries
by adding suitable trivial terms to $P$ and $Q$. 

We remark that adjoint-symmetries are sometimes called cosymmetries 
in the literature on integrable systems. 
Strictly speaking, in that context, 
a cosymmetry is a multiplier and thus has a gradient form $Q = E_u(H)$ 
(which is often expressed by writing $E_u =\delta/\delta u$). 

Now, using expression \eqref{evolPDE:R_Q} to get $R_Q^* = -Q'{}^*$, 
we obtain the following results on the operator \eqref{S.symm-to-adjsymm}. 

\begin{theorem}\label{thm:op.presymplectic}
Suppose $u_t=g$ possesses an adjoint-symmetry $Q$ that is not a multiplier. 
Then the linear differential operator \eqref{S.symm-to-adjsymm}, 
under which symmetries are mapped into adjoint-symmetries, 
has the skew-symmetric form 
\begin{align}\label{S.presymplectic}
\S = Q' - Q'{}^* = -S^* . 
\end{align}
For a symmetry $\hat\X_P=P\partial_u$, 
the adjoint-symmetry $\S(P)$ is a multiplier, yielding a conservation law, 
if and only if 
\begin{align}\label{S.presymplectic.multiplier.condition}
\pr\hat\X_P\S+\S P'+P'{}^* \S =0 .
\end{align}
\end{theorem}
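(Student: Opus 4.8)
The plan is to derive both assertions by specializing the general results already proved for arbitrary PDEs to the evolution form $G=u_t-g$, exploiting the two identities $R_Q=-Q'$ and $R_P=P'$ recorded in \eqref{evolPDE:R_Q} and \eqref{evolPDE:R_P}. For the form of $\S$ there is essentially nothing to compute: substituting $R_Q=-Q'$ into the defining formula \eqref{S.symm-to-adjsymm} gives $R_Q^*=-Q'{}^*$, hence $\S=R_Q^*+Q'=Q'-Q'{}^*$, which is manifestly skew-adjoint since $\S^*=Q'{}^*-Q'=-\S$. This is exactly \eqref{S.presymplectic}.

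For the multiplier condition, I would start from the general criterion \eqref{S.multiplier.condition} of Proposition~\ref{prop:condition.Qismultiplier}, namely that $\S(P)$ is a multiplier iff $(\pr\hat\X_P\S^*+\S^*R_P+P'{}^*\S^*)G=0$ holds off $\Esp$. Inserting $\S^*=-\S$ from the first part together with $R_P=P'$, each of the three terms acquires a common factor $-1$ (using that $\pr\hat\X_P$ acts linearly on the coefficients of an operator), so the criterion collapses to the requirement that $(\pr\hat\X_P\S+\S P'+P'{}^*\S)G=0$ hold off $\Esp$.

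The remaining, and principal, step is to upgrade this ``holds when applied to $G$'' statement to the pure operator identity $\pr\hat\X_P\S+\S P'+P'{}^*\S=0$ asserted in \eqref{S.presymplectic.multiplier.condition}; this is where I expect the only real work to lie. Here I would use the evolution structure decisively. After eliminating $t$-derivatives of $u$ from $P$ and $Q$ (as noted to be possible without loss of generality), the coefficients of $P'$, $Q'$, $Q'{}^*$, and hence of $\S=Q'-Q'{}^*$, depend only on the spatial jet variables $x,t,u,u_x,u_{xx},\dots$, and these operators involve only the spatial total derivatives $D_x$; the same then holds of $\pr\hat\X_P\S$, whose coefficients are the images under $\pr\hat\X_P$ of the coefficients of $\S$. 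Writing the combined operator as $\mathcal{M}=\sum_K M^K D_x^K$ with purely spatial coefficients $M^K$, and using $G=u_t-g$ with $g$ free of $t$-derivatives, I would compute $\mathcal{M}(G)=\sum_K M^K u_{t,K}-\sum_K M^K D_x^K g$, where $u_{t,K}:=D_x^K u_t$.

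The key observation is that the mixed jet coordinates $u_{t,K}$ occur only in the first sum and are absent from every $M^K$ and from $g$. Treating all jet coordinates as independent (which is the meaning of ``off $\Esp$''), the vanishing of $\mathcal{M}(G)$ therefore forces the coefficient of each independent variable $u_{t,K}$ to vanish separately, i.e.\ $M^K=0$ for every $K$, whence $\mathcal{M}=0$ as an operator. The reverse implication is trivial, so the operator identity is equivalent to the reduced condition of the previous paragraph, completing the proof. It is worth emphasizing that this jet-splitting argument, which is the crux, uses the evolution hypothesis essentially through the affine dependence of $G$ on $u_t$; for a general PDE one recovers only the weaker ``applied to $G$'' form of Proposition~\ref{prop:condition.Qismultiplier}.
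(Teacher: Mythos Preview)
Your proof is correct and follows the paper's approach: the paper also obtains Theorem~\ref{thm:op.presymplectic} by direct specialization of Theorem~\ref{thm:op.symm-to-adjsymm} and Proposition~\ref{prop:condition.Qismultiplier} to the evolution case via the identities $R_Q=-Q'$ and $R_P=P'$, with no separate proof displayed. The one place you go further than the paper is your jet-splitting argument upgrading the condition $(\pr\hat\X_P\S+\S P'+P'{}^*\S)G=0$ to the pure operator identity \eqref{S.presymplectic.multiplier.condition}; the paper states the latter but does not spell out this passage, relying instead on the remark that elimination of $t$-derivatives from $P$ and $Q$ removes gauge freedom so that $\S$ is canonically defined off $\Esp$. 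Your argument---that after this normalization $\mathcal{M}$ is a purely spatial operator with spatial-jet coefficients, so the mixed variables $u_{t,K}$ appear linearly and independently in $\mathcal{M}(G)$---is the natural way to make that remark precise, and it is valid.
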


Note that the operator $\S$ here is well defined off of $\Esp$, 
since the elimination of $t$-derivatives of $u$ in $Q$ and $P$
removes all gauge freedom in both $\S$ and $S(P)$. 

Since $\S$ is skew-symmetric, it can be viewed as defining a pre-symplectic operator,
namely, a skew, linear operator that maps symmetries to adjoint-symmetries. 
Moreover, when an inverse $\S^{-1}$ exists, 
it can be viewed as defining a pre-Hamiltonian operator \cite{KerKraVer2004}, 
namely, a skew, linear operator that maps adjoint-symmetries to symmetries. 

Finally, 
we remark that the preceding results for evolution equations 
have an elegant formulation 
using the geometrical Lie derivative operator $\lieder{t}$
associated to the flow defined by $u_t=g$. 
In particular, as discussed in \Ref{Anc-review,AncWan}, 
symmetries and adjoint-symmetries are respectively given by 
$\lieder{t}(P\partial_u) =0$
and 
$\lieder{t}(Q\d u) =0$, 
where $P\partial_u$ is a symmetry vector field 
and $Q\d u =0$ is an adjoint-symmetry 1-form. 
The condition for $\S(P)$ to be a multiplier is then given by $\lieder{P}\S=0$,
where $\lieder{P}$ is the Lie derivative with respect to $P\partial_u$.

\subsection{Euler--Lagrange equations and recursion operators}

When a PDE $G=0$ is an Euler--Lagrange equation, 
namely $G=E_u(L)$ for some Lagrangian $L$, 
adjoint-symmetries coincide with symmetries. 
In this situation, 
the generalized pre-symplectic operator  \eqref{S.symm-to-adjsymm} becomes 
a recursion operator for symmetries. 

\begin{theorem}\label{thm:op.recursion}
Suppose an Euler--Lagrange equation $G=0$ 
possesses a symmetry $\hat\X_P=P\partial_u$ (in characteristic form) that is not variational, namely
\begin{align*}
G'(P)=R_P(P),
\quad
E_u(PG)\not\equiv 0
\end{align*}
holds off of $\Esp$,
where $R_P$ is non-zero linear differential operator in total derivatives whose coefficients are non-singular differential functions on $\Esp$. 
Then 
\begin{equation}\label{S.symm.op}
\S := R_P^* + P'
\end{equation}
is a linear differential operator that maps symmetries to symmetries
(namely, a recursion operator). 
\end{theorem}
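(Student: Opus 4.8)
The plan is to obtain this result as an immediate specialization of Theorem~\ref{thm:op.symm-to-adjsymm}, exploiting the two features special to the Euler--Lagrange case: first, that $G'=G'{}^*$ holds (the self-adjointness characterizing Euler--Lagrange equations), so that the symmetry determining equation and the adjoint-symmetry determining equation coincide; and second, that by Noether's theorem the non-variational condition $E_u(PG)\not\equiv0$ is exactly the statement that $P$ fails to be a multiplier.

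First I would observe that, since $G'=G'{}^*$, the off-solution relation $G'(P)=R_P(G)$ can be rewritten as $G'{}^*(P)=R_P(G)$. This exhibits $P$ as an adjoint-symmetry, namely $Q=P$, with associated operator $R_Q=R_P$. In other words, self-adjointness guarantees that $P$ is simultaneously a symmetry and an adjoint-symmetry of $G=0$.

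Next I would verify the non-multiplier hypothesis of Theorem~\ref{thm:op.symm-to-adjsymm}. The condition $E_u(PG)\not\equiv0$ is, by the Noether correspondence recalled in the preliminaries (variational symmetries $\leftrightarrow$ conservation law multipliers), precisely the requirement that $Q=P$ be a non-multiplier adjoint-symmetry; together with the stated non-vanishing of $R_P$, all hypotheses of Theorem~\ref{thm:op.symm-to-adjsymm} are then met with $Q=P$ and $R_Q=R_P$.

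Applying that theorem yields that $\S=R_Q^*+Q'=R_P^*+P'$ maps symmetries into adjoint-symmetries. Finally, invoking once more the coincidence of adjoint-symmetries with symmetries for Euler--Lagrange equations, I would conclude that $\S$ maps symmetries into symmetries, i.e.\ it is a recursion operator. I do not anticipate a genuine obstacle here, since the analytic content is entirely carried by Theorem~\ref{thm:op.symm-to-adjsymm}; the only point requiring care is the clean identification---via self-adjointness and Noether's theorem---of a non-variational symmetry with a non-multiplier adjoint-symmetry, so that the earlier theorem applies verbatim.
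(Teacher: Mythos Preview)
Your proposal is correct and matches the paper's own argument: the paper presents Theorem~\ref{thm:op.recursion} as an immediate specialization of Theorem~\ref{thm:op.symm-to-adjsymm} to the Euler--Lagrange case, noting that adjoint-symmetries coincide with symmetries (since $G'=G'{}^*$) and that a non-variational symmetry is precisely a non-multiplier adjoint-symmetry. Your identification of these two key points is exactly what the paper relies on.
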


It is natural to ask under what conditions does this symmetry recursion operator \eqref{S.symm.op}
act as a recursion operator on variational symmetries. 

\begin{proposition}\label{prop:variationalsymm.condition}
If $P\partial_u$ is a variational symmetry of the Euler--Lagrange equation $G=0$,
then $\S(P)\partial_u$ is also a variational symmetry of $G=0$ 
if and only if
\begin{align}\label{EL.S.multiplier.condition}
[\S^*,P'{}^*]G = (\pr\X_{P}\S^*)G . 
\end{align}
\end{proposition}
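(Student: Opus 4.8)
The plan is to reduce the statement to the multiplier criterion already established in Proposition~\ref{prop:condition.Qismultiplier} and then to exploit the extra structure supplied by $G$ being Euler--Lagrange together with $P$ being variational. By Noether's theorem, a symmetry $\S(P)\partial_u$ is variational precisely when $\S(P)$ is a conservation law multiplier; and since $G=0$ is Euler--Lagrange, adjoint-symmetries coincide with symmetries, so Proposition~\ref{prop:condition.Qismultiplier} applies verbatim with the symmetry being the given variational $P$. Hence $\S(P)\partial_u$ is a variational symmetry if and only if
\begin{equation*}
(\pr\hat\X_P\S^*+\S^*R_{P}+P'{}^*\S^*)G=0
\end{equation*}
holds off of $\Esp$ (with $\pr\X_P\S^*=\pr\hat\X_P\S^*$ denoting the operator obtained by letting the prolonged characteristic field act on the coefficients of $\S^*$). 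The entire task is then to show that, under the variational hypothesis on $P$, this criterion collapses to the commutator condition \eqref{EL.S.multiplier.condition}.

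First I would record the two identities that the hypotheses supply, both holding off of $\Esp$. Because $G=E_u(L)$ is Euler--Lagrange, its Frechet derivative is self-adjoint, $G'=G'{}^*$. Because $P\partial_u$ is a variational symmetry, $P$ is a multiplier, so the determining equation $E_u(PG)\equiv0$ together with the product rule for $E_u$ gives $G'{}^*(P)+P'{}^*(G)=0$. Combining these with the off-solution symmetry relation $G'(P)=R_P(G)$ from \eqref{symm.deteqn.offsolns} yields the single function identity
\begin{equation*}
R_P(G)=G'(P)=G'{}^*(P)=-P'{}^*(G) .
\end{equation*}

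The key step is then to feed this identity into the middle term of the criterion. Here one must be careful that $\S^*R_P$ denotes the composite operator applied to $G$, namely $\S^*\big(R_P(G)\big)$; since $R_P(G)$ and $-P'{}^*(G)$ are equal as differential functions off of $\Esp$, applying the operator $\S^*$ to each gives $\S^*R_P(G)=-\S^*\big(P'{}^*(G)\big)=-(\S^*P'{}^*)(G)$. Substituting this back, the criterion becomes
\begin{equation*}
(\pr\hat\X_P\S^*)(G)-(\S^*P'{}^*)(G)+(P'{}^*\S^*)(G)=0 ,
\end{equation*}
and rearranging the last two terms into a commutator gives exactly $[\S^*,P'{}^*]G=(\pr\X_P\S^*)G$, which is \eqref{EL.S.multiplier.condition}. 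Since each link in this chain — Noether's theorem, Proposition~\ref{prop:condition.Qismultiplier}, the substitution of a true identity, and the algebraic rearrangement — is an equivalence, reversing the steps establishes the ``only if'' direction as well, so the stated biconditional is genuine.

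I expect the only real subtlety to be the legitimacy of the substitution in the middle term: one is using the \emph{function} identity $R_P(G)=-P'{}^*(G)$ inside $\S^*$, not the operator identity $R_P=-P'{}^*$ (which is false in general), so it is worth emphasizing that $\S^*$ is being applied to a fixed differential function on which the two expressions agree off $\Esp$, whence the results agree. Everything else — invoking Noether's theorem, the coincidence of symmetries and adjoint-symmetries in the Euler--Lagrange case, and the self-adjointness $G'=G'{}^*$ — has already been set up earlier in the paper and requires no new computation.
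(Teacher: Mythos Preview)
Your proof is correct and follows essentially the same route as the paper: both reduce to Proposition~\ref{prop:condition.Qismultiplier}, use the variational hypothesis $E_u(PG)=0$ together with $G'=G'{}^*$ to replace $R_P(G)$ by $-P'{}^*(G)$, and then rearrange into the commutator form. The paper phrases the substitution more tersely as ``$R_P=-P'{}^*$'', whereas you are (rightly) careful to note that only the function identity $R_P(G)=-P'{}^*(G)$ is needed, which is exactly what the condition requires since $\S^*$ is applied to $R_P(G)$.
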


\begin{proof}
When $P$ is the characteristic of a variational symmetry, 
it satisfies $0=E_u(PG) = G'(P)+P'{}^*(G)$, 
which yields $R_{P}=-P'{}^*$.
Hence the condition \eqref{S.multiplier.condition} becomes
\begin{align*}
(\pr\X_{P}(\S^*)-\S^*P'{}^*+P'{}^*\S^*)G=0 . 
\end{align*}
\end{proof}

\begin{remark}
Since variational symmetries of an Euler--Lagrange equation 
are equivalent to conservation law multipliers by Noether's theorem,  
the condition \eqref{EL.S.multiplier.condition} is necessary 
for the operator \eqref{S.symm.op} to be a recursion operator 
on conservation law multipliers. 
The condition is sufficient if it holds with $\S(P)$ in place of $P$.
\end{remark}

The formula \eqref{S.symm.op} uses symmetries $P$ 
which are lifted off of the solution space $\Esp$ of $G=0$. 
Recall, a symmetry is trivial if it vanishes on $\Esp$,
and two symmetries that differ by a trivial symmetry 
are said to be equivalent. 
Consider a trivial symmetry $P=\Dop(G)$ of $G=0$, 
where $\Dop=\sum_{J} C^J D_J$ is any linear differential operator in total derivatives. 
Similarly to the situation for adjoint-symmetries, 
the resulting recursion operator \eqref{S.symm.op} has the form \eqref{trivS.op}
which vanishes when it is 
applied to any symmetry characteristic $P$ and evaluated on $\Esp$. 

Hence, we will regard any two recursion operators \eqref{S.symm.op} 
as being equivalent if they differ by a trivial operator of the form \eqref{trivS.op}. 

To compare this recursion operator to the formula in Proposition~\ref{prop:EL.recursion},
we consider the situation when $P$ is the characteristic function of a Lie point symmetry. 

\begin{corollary}\label{corr:recursion-pointsymm}
If $P= \eta(x,u) - \xi^i(x,u) u_{i}$ is the characteristic function of 
a non-variational Lie point symmetry of an Euler--Lagrange equation $G=0$, 
then the corresponding symmetry recursion operator \eqref{S.symm.op} is 
$\S = W$,
where $W$ is the variational integrating factor \eqref{W}. 
\end{corollary}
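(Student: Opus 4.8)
The plan is to substitute into the formula \eqref{S.symm.op} the explicit forms that $R_P$ and $P'$ take for a Lie point symmetry, and to verify that the genuine differential-operator parts cancel, leaving only the multiplication operator $W$. Since $G=0$ is Euler--Lagrange, adjoint-symmetries coincide with symmetries and a non-variational symmetry is the same as a non-multiplier adjoint-symmetry, so Theorem~\ref{thm:op.recursion} applies and $\S=R_P^*+P'$ is the relevant operator.

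First I would recall from the preliminaries that, off of $\Esp$, a Lie point symmetry with characteristic $P=P_\p=\eta-\xi^i u_{x^i}$ has
\begin{equation*}
R_P = f_\p -\xi^i D_{x^i},
\end{equation*}
where $f_\p$ is the function defined by $\pr\X_\p(G)=f_\p G$ in \eqref{X.on.G}. Taking the formal adjoint and using $(\xi^i D_{x^i})^* = -D_{x^i}\circ\xi^i = -\xi^i D_{x^i} - (D_{x^i}\xi^i)$, I obtain
\begin{equation*}
R_P^* = f_\p + \xi^i D_{x^i} + D_{x^i}\xi^i ,
\end{equation*}
where $D_{x^i}\xi^i = \xi^i_{x^i} + \xi^i_u u_{x^i}$ (summed over $i$) denotes multiplication by the indicated total derivative of $\xi^i(x,u)$.

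Next I would compute the Frechet derivative of $P=\eta(x,u)-\xi^i(x,u)u_{x^i}$, which gives the first-order operator
\begin{equation*}
P' = (\eta_u - \xi^i_u u_{x^i}) - \xi^i D_{x^i} .
\end{equation*}
Adding the two operators, the transport terms $\pm\,\xi^i D_{x^i}$ cancel, and the undifferentiated contributions proportional to $u_{x^i}$ coming from $D_{x^i}\xi^i$ and from $P'$ cancel as well, leaving
\begin{equation*}
\S = R_P^* + P' = f_\p + \eta_u + \xi^i_{x^i} = W ,
\end{equation*}
which is exactly the variational integrating factor \eqref{W}. The only delicate point is the bookkeeping of the adjoint of the transport term $\xi^i D_{x^i}$ — in particular, keeping track of the extra zeroth-order piece $D_{x^i}\xi^i$ it produces — and confirming that the $u_{x^i}$-proportional pieces from this term and from $P'$ cancel precisely against one another; once this is handled, the identification $\S=W$ is immediate, and the recursion-operator property follows from Proposition~\ref{prop:EL.recursion}.
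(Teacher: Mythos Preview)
Your proposal is correct and follows essentially the same approach as the paper: both compute $R_P^*$ from $R_P=f_\p-\xi^iD_{x^i}$, compute $P'$ directly from $P=\eta-\xi^i u_{x^i}$, and observe that the transport terms and the $\xi^i_u u_{x^i}$ contributions cancel to leave $\S=f_\p+\eta_u+\xi^i_{x^i}=W$. The computations are line-for-line equivalent.
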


\begin{proof}
We have $R_P = f_\p - \xi^iD_i$ from relation \eqref{X.on.G}. 
Hence, $R_P^* = f_\p + \xi^i_u u_{i} + \xi^i_{x^i} +\xi^iD_i$. 
Since $P'= (\eta - \xi^i u_{i})' = \eta_u -u_{i}\xi^i_u -\xi^i D_i$, 
we get 
$R_P^* + P' = f_\p + \xi^i_{x^i} +\eta_u =W$ after cancellations. 
\end{proof}

Consequently, 
finding genuine (non-multiplicative) recursion operators \eqref{S.symm.op} 
requires going beyond Lie point symmetries. 

We first consider contact symmetries. 
Recall that, in the present setting of a single dependent variable, 
a contact symmetry is equivalent to a first-order symmetry in characteristic form. 
In particular, 
$\hat\X_P=P(x,u^{(1)})\partial_u$ 
corresponds to $\X = \xi^i(x,u^{(1)})\partial_{x^i} + \eta(x,u^{(1)})\partial_u$
where $\xi^i= -P_{u_i}$ and $\eta= P -u_i P_{u_i}$. 

\begin{corollary}\label{corr:recursion.contactsymm}
If $P(x,u^{(1)})$ is the characteristic function of a non-variational contact symmetry of 
an Euler--Lagrange equation $G=0$, 
with $\pr\X_P G = f G$, 
then the corresponding symmetry recursion operator \eqref{S.symm.op} is 
\begin{equation}\label{S.contact.symm.op} 
\S = f+P_{u} - D_{i}P_{u_i}
\end{equation}
which is a variational integrating factor. 
\end{corollary}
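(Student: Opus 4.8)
The plan is to reduce everything to the explicit form of $R_P$ and then compute $R_P^*+P'$ directly, checking that all first-order operator terms cancel so that $\S$ collapses to a purely multiplicative operator; the variational-integrating-factor claim will then follow from the identity $\S^*(G)=E_u(PG)$ already established in the proof of Theorem~\ref{thm:op.symm-to-adjsymm}.

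First I would pin down $R_P$. Since $P(x,u^{(1)})$ generates a contact symmetry with infinitesimal generator $\X=\xi^i\partial_{x^i}+\eta\partial_u$, where $\xi^i=-P_{u_i}$, the prolongation relation $\pr\X_P=\pr\hat\X_P+\xi^iD_i$ holds exactly as in the point-symmetry case. Combining this with $\pr\hat\X_P(G)=G'(P)=R_P(G)$ and the hypothesis $\pr\X_P G=fG$ gives $G'(P)=(f-\xi^iD_i)G$, and hence $R_P=f-\xi^iD_i$.

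Second, I would take the adjoint and add the Frechet derivative. Using $f^*=f$ together with $(\xi^iD_i)^*=-D_i\circ\xi^i=-(D_i\xi^i)-\xi^iD_i$ as operators, I obtain $R_P^*=f+(D_i\xi^i)+\xi^iD_i$. Since $P$ is first order, its Frechet derivative is $P'=P_u+P_{u_i}D_i$. Adding these and substituting $\xi^i=-P_{u_i}$, the first-order operator contributions $\xi^iD_i+P_{u_i}D_i$ cancel identically, while the multiplication term becomes $(D_i\xi^i)=-D_iP_{u_i}$, leaving $\S=f+P_u-D_iP_{u_i}$, a zeroth-order (multiplicative) operator. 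As a consistency check, specializing to the point-symmetry characteristic $P=\eta(x,u)-\xi^i(x,u)u_i$ reproduces $W=f+\eta_u+\xi^i_{x^i}$ of Corollary~\ref{corr:recursion-pointsymm}. For the final assertion I would invoke $\S^*(G)=E_u(PG)$, which follows from the product rule for $E_u$ and the self-adjointness $G'=G'{}^*$ of the Euler--Lagrange equation, with $Q=P$. Because $\S$ is multiplication by $W:=f+P_u-D_iP_{u_i}$, it is self-adjoint, so $\S^*=W$ and therefore $WG=E_u(PG)$ has Euler--Lagrange form; thus $W$ is a variational integrating factor.

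The main obstacle I expect is bookkeeping in the adjoint step. Because $\xi^i=-P_{u_i}$ now depends on $u^{(1)}$ rather than only on $(x,u)$, the total derivative $D_i\xi^i$ carries chain-rule contributions through the jet variables, and I must verify that no stray first-order operator survives after forming $R_P^*+P'$. The exact cancellation $\xi^iD_i+P_{u_i}D_i=0$ is the crux that forces $\S$ to be genuinely multiplicative and is precisely what distinguishes the contact case from a general higher-order symmetry, where such cancellation would fail and $\S$ would remain a true differential operator.
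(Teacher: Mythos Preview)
Your proposal is correct and follows essentially the same route as the paper: identify $R_P=f-\xi^iD_i$ from $\pr\X_P=\pr\hat\X_P+\xi^iD_i$, take the adjoint, add $P'$, and observe that the first-order pieces cancel because $\xi^i=-P_{u_i}$. The only cosmetic difference is that the paper writes $P'$ through the decomposition $P=\eta-u_i\xi^i$ and then substitutes $\eta=P-u_iP_{u_i}$, $\xi^i=-P_{u_i}$ back at the end, whereas you write $P'=P_u+P_{u_i}D_i$ directly; your version is slightly cleaner. Your explicit justification of the variational-integrating-factor claim via $\S^*(G)=E_u(PG)$ and self-adjointness of multiplication is a nice touch that the paper leaves implicit.
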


\begin{proof}
First, note $\pr\X_P(G) = f G$ holds for some differential function $f$, 
because the contact symmetry acts as a generator of a point transformation on $(x,u,\partial u)$
which prolongs to a point transformation in jet space (of any finite order).
Next, we have
$P'=\eta' - (u_i\xi^i)' = \eta_{u} + \eta_{u_i} D_i -u_i(\xi^i_{u} + \xi^i_{u_j}D_j) -\xi^iD_i$, 
and 
$R_P= f - \xi^i D_i$
so thus
$R_P^*= f  + (D_i\xi^i) + \xi^iD_i$ 
where $D_i\xi^i = \xi^i_{x^i} + u_i\xi^i_{u} + u_{ij}\xi^i_{u_j}$. 
Substitution of the expressions for $\xi^i$ and $\eta$ in terms of $P$ 
then leads to the result \eqref{S.contact.symm.op}. 
\end{proof}

Therefore, the formula \eqref{S.symm.op} can yield a genuine recursion operator 
only when $\hat\X_P=P\partial_u$ is either a higher symmetry or a nonlocal symmetry of $G=0$.

\section{Examples}\label{sec:examples}

We will first give two examples to illustrate for Euler--Lagrange PDEs 
how Theorem~\ref{thm:op.recursion} can be used to extract a symmetry recursion operator from a higher symmetry. 
In these examples we also will illustrate the condition in Proposition~\ref{prop:variationalsymm.condition} 
for this operator to be a variational-symmetry recursion operator. 

Next we will give two examples to illustrate how Theorem~\ref{thm:op.presymplectic}
yields a pre-symplectic operator from an adjoint-symmetry 
for nonlinear evolution equations. 

Finally, we will give two examples of Theorem~\ref{thm:op.symm-to-adjsymm} 
for obtaining a generalized pre-symplectic operator from an adjoint-symmetry of 
a nonlinear PDE which is not of evolution form or Euler--Lagrange form. 
The condition in Proposition~\ref{prop:condition.Qismultiplier} 
for this operator to map symmetries into conservation law multipliers will also be illustrated.

\subsection{Euler--Lagrange PDE examples}\hfil

\emph{Linear Wave Equation}: 
The first example is the wave equation 
\begin{equation}\label{linearwaveeqn}
u_{tt} = a u_{xx} + b u_{t} + c u
\end{equation}
with constant coefficients $a>0,b,c$. 
This equation becomes Euler--Lagrange after multiplication by $e^{-bt}$ 
which is a variational integrating factor. 
In particular, 
\begin{equation}
G = e^{-bt}(u_{tt} - a u_{xx} - b u_{t} - c u)=E_u(L)
\end{equation}
holds for $L = \tfrac{1}{2}e^{-bt}(-u_t{}^2  + a u_x{}^2 - cu^2)$. 

The scaling symmetry $\X = u\partial_u$ of $G=0$ is non-variational, 
and so is the time translation symmetry when $b\neq 0$. 
Because this wave equation is linear and translation invariant, 
it admits $D_t$ and $D_x$ as symmetry recursion operators. 
Applying even powers of these operators to the scaling symmetry yields 
higher symmetries $\X_{(p,q)} = u_{pt\,qx}\partial_u$ given by $p+q=2,4,\ldots$, 
with $p\geq 0$ and $q\geq 0$ denoting the number of $t$ and $x$ derivatives, respectively. 
Each of these higher symmetries can be shown to be non-variational. 
For instance, in the case when $p=0$ and $q=2,4,\ldots$, 
it is easy to see that 
\begin{equation*}
E_u(u_{qx}G) = 2 e^{-bt}(u_{2t\,qx} - a u_{(q+2)x} - b u_{t\,qx} - c u_{qx}) \not\equiv 0 . 
\end{equation*}
A similar result holds in the general case. 
Similarly, $\X_{(p,q)}$ can be shown to be variational when $p+q$ is odd. 

These higher symmetries of the linear wave equation 
have $R_{(p,q)} = P_{(p,q)}'= D_{t\mathstrut}^p D_{x\mathstrut}^q$,
and hence $R_{(p,q)}^* = R_{(p,q)}$ holds for the non-variational symmetries 
since $p+q$ is even. 
Thus, the resulting symmetry recursion operators \eqref{S.symm.op} 
coming from the non-variational symmetries are given by 
$\S_{(p,q)}= 2 D_{t\mathstrut}^p D_{x\mathstrut}^q$
which can be expressed as compositions of the primitive operators
\begin{equation}\label{linearwave.Sop}
\S_{(1,1)}= 2 D_{t\mathstrut}D_{x\mathstrut}, 
\quad
\S_{(2,0)}= 2 D_{t\mathstrut}^2,
\quad
\S_{(0,2)}= 2 D_{x\mathstrut}^2 .
\end{equation}
Note $S_{(p,q)}^*=S_{(p,q)}$ since $p+q$ is even. 

The necessary condition \eqref{EL.S.multiplier.condition} for the operators \eqref{linearwave.Sop} 
to act as recursion operators on variational symmetries 
can be straightforwardly checked to hold for $\X_{(p,q)}$ with $p+q$ being odd.
Specifically, each $\S$ commutes with 
$P_{(p,q)}^{\prime *} = - P_{(p,q)}'= - D_{t\mathstrut}^p D_{x\mathstrut}^q$,
while $\pr\X_{(p,q)}\S$ vanishes,
so that $[\S^*,P_{(p,q)}^{\prime *}]= \pr\X_{(p,q)}\S^*=0$.  
Moreover, these properties are also sufficient to show that each operator $\S$ 
is a recursion operator on the variational symmetries $\X_{(p,q)}$ when $p+q$ is odd.

\emph{Korteweg--de Vries Equation}: 
The second example is the KdV equation
\begin{equation}\label{kdveqn}
u_t + u u_x +  u_{xxx} =0 , 
\end{equation}
which is an integrable nonlinear PDE having a symmetry recursion operator \cite{AblCla-book,Olv-book}
\begin{equation}\label{kdv.recursionop}
\Rop = D_x^2 + \tfrac{1}{3} u + \tfrac{1}{3} D_x u D_x^{-1} . 
\end{equation}
This PDE becomes an Euler--Lagrange equation through $u=v_x$, 
where $v$ is a potential yielding
\begin{equation}
G = v_{tx} + v_{x} v_{xx} +v_{xxxx}=E_v(L)
\end{equation}
for $L = -\tfrac{1}{2} v_tv_x - \tfrac{1}{6}v_x{}^3 + \tfrac{1}{2}v_{xx}{}^2$. 
The KdV recursion operator is transformed into a corresponding recursion operator for $G=0$:
\begin{equation}\label{pkdv.recursionop}
\widetilde\Rop = D_x^{-1} \Rop D_x = D_x^2 + \tfrac{1}{3} v_x + \tfrac{1}{3} D_x^{-1} v_x D_x . 
\end{equation}
This operator satisfies $\widetilde\Rop^*=\widetilde\Rop$,
which corresponds to $G'{}^*=G'$. 
It can be applied to any symmetry that is inherited from the underlying evolution equation
$v_t + \tfrac{1}{2}v_x{}^2 +v_{xxx}=0$. 

The scaling symmetry $\X = 3t\partial_t +x\partial_x -v\partial_v$ is non-variational. 
A higher scaling-type symmetry can be obtained by applying the recursion operator \eqref{pkdv.recursionop}
to the characteristic function 
\begin{equation}
P = -(v+x v_x + 3t v_t)
\end{equation}
of the scaling symmetry.
This yields
\begin{equation}\label{kdv.P1}
P_1 := \widetilde\Rop(P) = t(3v_{xxxxx} +5v_{x}v_{xxx}+\tfrac{5}{2}v_{xx}^2+\tfrac{5}{6}v_{x}^3) -x(v_{xxx} + \tfrac{1}{2}v_{x}^2) -3v_{xx}- \tfrac{1}{3}vv_{x} -\tfrac{1}{2}D_x^{-1}(v_x^2) , 
\end{equation}
giving a symmetry $\hat\X_1 = P_1\partial_v$ of the 
KdV equation in potential form
\begin{equation}\label{pkdveqn}
v_{tx} + v_{x} v_{xx} +v_{xxxx} =0 .
\end{equation}
Then we have, by direct computation, 
\begin{equation}
\begin{aligned}
R_{P_1} & = 
3tD_x^5
+(5tv_{x}-x)D_x^3
+(10tv_{xx}-4)D_x^2
+(10tv_{xxx}+\tfrac{5}{2}tv_{x}^2-xv_{x}-\tfrac{1}{3}v)D_x
\\&\qquad
+5tv_{xxxx} +5tv_{x}v_{xx} -xv_{xx}-\tfrac{8}{3}v_{x}
-\tfrac{1}{3}v_{xx}D_x^{-1}
\end{aligned}
\end{equation}
and 
\begin{equation}
R_{P_1}^* = 
-3tD_x^5
-(5tv_{x}-x)D_x^3
-(5tv_{xx}+1)D_x^2
-(5tv_{xxx} +\tfrac{5}{2}tv_{x}^2-xv_{x}-\tfrac{1}{3}v)D_x
-v_{x}
-\tfrac{1}{3}D_x^{-1}(v_{x}D_x)
\end{equation}
as well as 
\begin{equation}
P_1' = 
3tD_x^5
+(5tv_{x}-x)D_x^3
+(5tv_{xx} -3)D_x^2
+(5tv_{xxx}+\tfrac{5}{2}tv_{x}^2-xv_{x}-\tfrac{1}{3}v )D_x
-\tfrac{1}{3}v_{x}
-D_x^{-1} (v_{x}D_x) .
\end{equation}
The resulting symmetry recursion operator \eqref{S.symm.op} is given by 
\begin{equation}\label{kdv.Sop.recusion}
\S = -4\widetilde\Rop .
\end{equation}

The variational Lie point symmetries of the KdV equation in potential form \eqref{pkdveqn}
are well known to consist of 
time translation, space translation, and a Galilean boost,
given by the characteristic functions $-v_t$, $-v_x$, $x-t v_x$,
as well as shifts whose characteristic function is $f(t)$ for an arbitrary function. 
When $\widetilde\Rop$ is applied to the boost symmetry, 
it yields the scaling symmetry. 
(Note $\widetilde\Rop$ annihilates a shift, $f(t)$.)

The well-known hierarchy of variational higher symmetries in potential form 
is obtained from powers of $\widetilde\Rop$ applied to the space translation symmetry:
\begin{equation}\label{kdv.varsymms}
P_{(k)} = \widetilde\Rop^k(-v_x), 
\quad
k=0,1,2,\ldots
\end{equation}
where $P_{(0)}=-v_x$ is the characteristic of the space translation, 
and $P_{(1)}=v_t$ corresponds to the time translation. 

Consequently, 
since $\S$ is a recursion operator on the variational symmetries $P_{(k)}\partial_u$,
it must satisfy the condition \eqref{EL.S.multiplier.condition} 
for all $k\geq 0$. 
To see the content of this condition, 
consider $P_{(0)}=-v_x$. 
Since $P_{(0)}'=-D_x = -P_{(0)}^{\prime *}$, 
we see that the condition \eqref{EL.S.multiplier.condition} becomes
\begin{equation}
\pr\X_{u_x}\S + [\S,D_x]=0 . 
\end{equation}
This is the well-known property \cite{Olv-book} that 
the recursion operator \eqref{pkdv.recursionop} is invariant under the flow defined by $u_t=u_x$.

\subsection{Evolution PDE examples}\hfil

\emph{Airy Equation}: 
The third-order dispersive linear PDE 
\begin{equation}\label{airyeqn}
u_{t} + a(x) u_{xxx} =0
\end{equation}
is an Airy equation with a non-constant coefficient $a(x)$. 
It has no variational structure. 
Its adjoint-symmetries of the form $Q(t,x,u,u_t,u_x)$ 
are readily found by solving the determining equation 
$(D_t Q + D_x^3(a(x)Q))|_\Esp =0$. 
When $a(x)$ is arbitrary, 
the determining equation yields, other than solutions of the Airy equation itself, 
a linear combination of $u/a(x)$ and $u_t/a(x)$. 

The first of these adjoint-symmetries turns out to be a multiplier for a conservation law
\begin{equation*}
D_t( \tfrac{1}{2} u^2/a(x) ) + D_x( u u_{xx} -\tfrac{1}{2}u_x{}^2 )
= (u_{t} + a(x) u_{xxx}) u/a(x) =0
\end{equation*}
for solutions $u(t,x)$. 
The other adjoint-symmetry 
\begin{equation}\label{airy.Q}
Q=u_t/a(x)
\end{equation}
is not a multiplier,
since $E_u(G u_t/a(x)) = -2D_tG /a(x) \not\equiv 0$ where $G=u_{t} + a(x) u_{xxx}$. 

By a simple computation, we have 
$Q'= (1/a(x))D_t$ 
and 
$G'(Q) = -D_tG /a(x)$ 
which yields $R_Q = -(1/a(x))D_t$ and hence $R_Q^* = (1/a(x))D_t$. 
Thus, the resulting pre-symplectic operator \eqref{S.presymplectic} is given by 
\begin{equation}\label{airy.Sop}
\S= (2/a(x))D_t = -\S^* . 
\end{equation}

Because this Airy equation is linear and time-translation invariant, 
it admits $D_t$ as a recursion operator on symmetries and adjoint-symmetries. 
In particular, 
apart from elementary symmetries
whose characteristic function is given by solutions of the Airy equation itself, 
all other symmetries can be shown to be generated from 
powers of $D_t$ applied to the characteristic function of the scaling symmetry $u\partial_u$. 
Thus, $\X_{(k)} := u_{kt}\partial_u$, 
with $k\geq 0$ denoting the number of $t$ derivatives, 
comprises a sequence of higher symmetries of the Airy equation \eqref{airyeqn}. 
The pre-symplectic operator \eqref{airy.Sop} produces a corresponding sequence of 
adjoint-symmetries:
$Q_{(k)}= \S(u_{kt}) = (2/a(x)) u_{k+1\,t}$, $k=0,1,2,\ldots$,
where $Q_{(0)}=2Q$ is the adjoint-symmetry \eqref{airy.Q}. 

The condition \eqref{S.presymplectic.multiplier.condition} 
for $\S$ to produce a multiplier from $P_{(k)}=u_{kt}$ 
can be readily seen to imply that $k$ is odd,
since we have 
$\pr\hat\X_{P_{(k)}}(\S) =0$, 
$P_{(k)}'=D_t^k$ 
and $P_{(k)}^{\prime *}=(-1)^k D_t^k$,
whereby 
\begin{equation*}
0 = \pr\hat\X_{P_{(k)}}\S +\S P_{(k)}' +P_{(k)}^{\prime *}\S = (2(1+(-1)^k)/a(x))D_t^{k+1}
\end{equation*}
yields $(-1)^k = -1$. 
This condition agrees with the multiplier determining equation 
\begin{equation*}
0 =E_u(G Q_{(k)}) = -2(1+(-1)^k)D_t^{k+1}G /a(x) . 
\end{equation*}

\emph{Generalized Korteweg--de Vries Equation}: 
The gKdV equation 
\begin{equation}\label{gkdveqn}
u_t + u^p u_x +  u_{xxx} =0
\end{equation}
is a generalization of the KdV equation involving a nonlinearity power $p\neq 0$. 
For all powers, 
this equation has a Hamiltonian formulation 
$u_t = -\Dop(\delta H/\delta u)$ 
where $H=\int \tfrac{1}{(p+1)(p+2)} u^{p+2} - \tfrac{1}{2} u_x^2\,dx$
is the Hamiltonian functional,
and $\Dop=D_x$ is a Hamiltonian operator \cite{Olv-book}. 

The gKdV equation is not Euler--Lagrange as it stands. 
It has a scaling symmetry $\X = 3t\partial_t + x\partial_x -\tfrac{2}{p} u\partial_u$. 
The corresponding characteristic form of this symmetry is given by 
\begin{equation}
P= -(\tfrac{2}{p} u + x u_x + 3tu_t), 
\end{equation}
which can be expressed equivalently as 
\begin{equation}
P|_\Esp = -(\tfrac{2}{p} u + x u_x - 3t(u^pu_x +u_{xxx}))
\end{equation}
on the solution space. 

Since $D_x$ is a Hamiltonian operator, it maps adjoint-symmetries into symmetries. 
Hence, its inverse $D_x^{-1}$ maps symmetries into adjoint-symmetries. 
Applying this latter operator to the scaling symmetry, 
we obtain the adjoint-symmetry 
\begin{equation}\label{gkdv.Q}
Q = (1-\tfrac{2}{p}) v -x u  + 3t(\tfrac{1}{p+1}u^{p+1} +u_{xx})
\end{equation}
where $v$ is a potential defined by $u=v_x$,
and where $v_t = D_x^{-1} u_t$ is expressed through the gKdV equation \eqref{gkdveqn}. 
In particular, this adjoint-symmetry is nonlocal. 
By direct computation, we have 
\begin{equation}
Q' = 3tD_x^2  +3t u^p -x  +(1-\tfrac{2}{p}) D_x^{-1}
\end{equation}
and 
\begin{equation}
R_Q = - 3tD_x^2  -3t u^p + x -(1-\tfrac{2}{p}) D_x^{-1} , 
\end{equation}
whence 
\begin{equation}
R_Q^* = -3tD_x^2  -3t u^p + x  + (1-\tfrac{2}{p}) D_x^{-1} . 
\end{equation}
This yields 
\begin{equation}\label{gkdv-Sop}
\S = 2(1-\tfrac{2}{p}) D_x^{-1} , 
\end{equation}
which is a multiple of the inverse of the gKdV Hamiltonian operator $\Dop=D_x$. 

Note this operator $\S$ is trivial in the case $p=2$ which corresponds to the modified KdV equation. 
This is precisely the case in which $Q$ is a multiplier for a conservation law
and corresponds to the scaling symmetry being a variational symmetry 
when the mKdV equation is written in terms of the potential $v$. 

Finally, we will tie this example and the previous KdV example together to show how 
two Hamiltonian operators can be derived for the KdV equation 
from its scaling symmetry. 

For $p=1$, 
the gKdV equation \eqref{gkdveqn} coincides with the KdV equation \eqref{kdveqn}. 
The scaling adjoint-symmetry \eqref{gkdv.Q} becomes 
\begin{equation}\label{kdv.Q}
Q = -v -x u  + 3t(\tfrac{1}{2}u^2 +u_{xx}) , 
\end{equation}
which produces the operator 
\begin{equation}
\S = -2 D_x^{-1} . 
\end{equation}
The inverse of this operator is a multiple of the first Hamiltonian operator $\Dop=D_x$ 
of the KdV equation. 
In particular, 
$u_t = -\Dop(\delta H/\delta u)$ 
where $H=\int \tfrac{1}{6} u^3 - \tfrac{1}{2} u_x^2\,dx$
is the Hamiltonian functional. 

A recursion operator on adjoint-symmetries of the KdV equation is given by 
the adjoint of the symmetry recursion operator \eqref{kdv.recursionop}, 
\begin{equation}\label{kdv.adjrecursionop}
\Rop^* = D_x^2 + \tfrac{1}{3} u + \tfrac{1}{3} D_x^{-1} u D_x .
\end{equation}
Applying this operator to the adjoint-symmetry \eqref{kdv.Q}, 
we obtain a higher adjoint-symmetry 
\begin{equation}\label{kdv.higherQ}
Q_1: = \Rop^* Q 
=t(3u_{xxxx}+5uu_{xx}+\tfrac{5}{2}u_{x}^2 +\tfrac{5}{6}u^3)
-x(u_{xx} + \tfrac{1}{2}u^2)
-3u_{x} -\tfrac{1}{3}uv -\tfrac{1}{2}D_x^{-1}u^2
\end{equation}
which coincides with the higher symmetry \eqref{kdv.P1} in potential form, $P_1 = Q_1$. 
Then, by direct computation, 
we have 
\begin{equation}
Q_1' = 
-D_x^{-1}u
-\tfrac{1}{3}uD_x^{-1}
+(5u_{xx}+\tfrac{5}{2}u^2)t -xu -\tfrac{1}{3}v
+(5tu_{x}-3)D_x
+(5tu-x)D_x^2
+3tD_x^4
\end{equation}
and 
\begin{equation}
R_{Q_1} = 
D_x^{-1}u
+ \tfrac{1}{3}uD_x^{-1} 
+(-5tu_{xx}-\tfrac{5}{2}tu^2+xu+\tfrac{1}{3}v)
+(-5tu_{x}+3)D_x
+(-5tu+x) D_x^2
-3tD_x^4 , 
\end{equation}
whence 
\begin{equation}
R_{Q_1}^* = 
-uD_x^{-1}
-\tfrac{1}{3} D_x^{-1}u
-5tu_{xx}-\tfrac{5}{2}tu^2+xu+\tfrac{1}{3}v
-(5tu_{x}+1)D_x
-(5tu-x)D_x^2
-3tD_x^4 .
\end{equation}
Therefore, we obtain 
\begin{equation}\label{kdv.Sop}
\S = -4( D_x + \tfrac{1}{3} (u D_x^{-1} + D_x^{-1}u) ) 
= -4 D_x^{-1}\Rop
= -4 \Rop^* D_x^{-1} ,
\end{equation}
which is an operator that maps symmetries into adjoint-symmetries, 
where $\Rop$ is the symmetry recursion operator \eqref{kdv.recursionop}
and $\Rop^*$ is its adjoint. 

The formal inverse of this operator \eqref{kdv.Sop} is given by 
\begin{equation}\label{kdv.invSop}
\S^{-1} = -\tfrac{1}{4} \Rop^{-1}D_x = -\tfrac{1}{4} D_x(\Rop^*)^{-1} ,
\end{equation}
which is a nonlocal operator that maps adjoint-symmetries into symmetries. 
It constitutes a second Hamiltonian operator for the KdV equation. 
In particular, 
$u_t = -\Hop(\delta E/\delta u)$ 
where $E=\int \tfrac{5}{72} u^4 -\tfrac{5}{6} u u_{x}^2 + \tfrac{1}{2} u_{xx}^2\,dx$
is the Hamiltonian functional,
and $\Hop=\Rop^{-1}D_x= D_x(\Rop^*)^{-1}$ is a Hamiltonian operator.

\subsection{Non-evolution PDE examples}\hfil

\emph{Peakon Equation}: 
Every nonlinear dispersive wave equation
\begin{equation*}
u_{t}-u_{txx} + f(u,u_x)(u-u_{xx}) + (g(u,u_x)(u-u_{xx})){}_x =0
\end{equation*}
possesses multi-peakon solutions \cite{AncRec2019} 
which are given by a linear superposition of peaked waves 
$A(t)\exp(-|x-X(t)|)$ having a time-dependent amplitude $A(t)$ and position $X(t)$. 
This class of wave equations includes the integrable Camassa--Holm (CH) equation \cite{CamHol,CamHolHym} 
\begin{equation*}
m_t + u_x m + (um){}_x =0, 
\quad
m=u-u_{xx},
\end{equation*}
which arises from a shallow water approximation 
for the Eulerian equations of incompressible fluid flow. 
The CH equation is related (via a reciprocal transformation) to a negative flow 
in the KdV hierarchy of integrable equations.
A modified version of the CH equation is the mCH equation \cite{Fok95a,Olv-Ros,Fok-Olv-Ros,Fuc,Qia-Li}
(also known as the FORQ equation)
\begin{equation*}
m_t + ((u^2- u_x{}^2)m){}_x =0,
\quad
m=u-u_{xx} .
\end{equation*}
This is an integrable equation which is similarly related to a negative flow 
in the mKdV hierarchy of integrable equations.

There is a nonlinear unified generalization of the CH and mCH equations \cite{AncRec2019} 
\begin{equation}\label{gchmcheqn}
m_t +a u_x (u^2 - u_x{}^2)^{k/2} m + (a u(u^2 - u_x{}^2)^{k/2} m + b (u^2 -u_x{}^2)^{(k+1)/2} m){}_x = 0,
\quad
m=u-u_{xx}
\end{equation}
with an arbitrary nonlinearity power $k\geq 0$,
and constant coefficients $a,b$. 
In particular, the CH equation is given by $k=2$, $b=0$;
and the mCH equation is given by $k=1$, $a=0$.
Those two equations share a Hamiltonian structure that is retained for the generalized equation \eqref{gchmcheqn}: 
$m_t = \Delta D_x(\delta H/\delta m)$,
where $\Delta = 1-D_x^2$. 

With respect to $u$, 
the generalized equation \eqref{gchmcheqn} is neither an evolution equation nor an Euler--Lagrange equation. 
Its adjoint-symmetries $Q$ are determined by 
\begin{equation}
\begin{aligned}
& 
\big( - \Delta D_t Q
-b \big( (u^2-u_x{}^2)^{(k+1)/2} \Delta D_xQ +(k+1)(u^2-u_x{}^2)^{(k-1)/2} m (u D_x Q -u_{x}D_x^2 Q) \big)
\\&\qquad
-a\big( (u^2-u_x{}^2)^{k/2} (uD_x Q -u_{x}D_x^2 Q+(k+1)m D_x Q+u \Delta D_x Q) 
\\&\qquad
-k u_{x}(u^2-u_x{}^2)^{(k-2)/2} m (uD_x^2 Q -u_{x}D_x Q) \big)
\big)\big|_\Esp =0 .
\end{aligned}
\end{equation}

The generalized equation \eqref{gchmcheqn} has a scaling symmetry 
$\X = (k+1)t \partial_t -u\partial_u$. 
Similarly to the gKdV example, 
a scaling adjoint-symmetry of equation \eqref{gchmcheqn} is given by 
\begin{equation}\label{gchmch.Q}
Q = -v - (k+1)t v_t
\end{equation}
where $v$ is a potential defined by $u=v_x$. 
We then have 
\begin{equation}
Q'=-(k+1)t D_t D_x^{-1} -D_x^{-1}
\end{equation}
and 
\begin{equation}
R_Q=(k+1) t D_x^{-1}D_t +(k+2) D_x^{-1} , 
\end{equation}
whence
\begin{equation}
R_Q^*=(k+1) t D_t D_x^{-1} -D_x^{-1} . 
\end{equation}
Thus, the resulting generalized pre-symplectic operator \eqref{S.symm-to-adjsymm} 
is given by 
\begin{equation}\label{gchmch.Sop}
\S = -2D_x^{-1} . 
\end{equation}
Note this operator is skew. 

In addition to the scaling symmetry, 
the generalized equation \eqref{gchmcheqn} also possesses 
translation symmetries. 
When $\S$ is applied to their characteristics $P=-u_x$ and $P=-u_t$, 
it yields the adjoint symmetries 
\begin{align}
& \S(-u_x) = 2 u,
\\
& \S(-u_t) = 2( u_{tx} - \tfrac{a}{k+2} (u^2-u_x{}^2)^{(k+2)/2} -(a (u^2-u_x{}^2)^{k/2}u+b (u^2-u_x{}^2)^{(k+1)/2})m ).
\end{align}
The condition \eqref{S.multiplier.condition} for these adjoint-symmetries to be multipliers
simplifies to be $(\pr\hat\X_{P}\S+\S R_{P} -P' S)G=0$,
where $G$ is the left side of equation \eqref{gchmcheqn}. 

In the case of the space translation, 
we have $P=-u_x$ and $R_P = P' = -D_x$,
so the condition becomes $[\S,D_x]G=0$,
which holds due to $[\S,D_x]=0$. 
Hence $Q=u$ is a multiplier. 
It yields a conservation law for the $H^1$ norm (up to an overall factor of $\tfrac{1}{2}$) of solutions $u(t,x)$:
$\frac{d}{dt}\int_\Rnum u^2 + u_x{}^2\,dx =0$. 

In the case of the time translation, 
the condition similarly becomes $[\S,D_t]G=0$,
which holds due to $[\S,D_t]=0$. 
Hence $Q=-u_t$ is a multiplier. 
The resulting conservation law is given by 
\begin{equation}
\begin{aligned}
\frac{d}{dt}\int_\Rnum \big(
& 
\tfrac{a}{k+2} u (u^2-u_x{}^2)^{(k+2)/2}
+\tfrac{b}{k+3} (u^2-u_x{}^2)^{(k+3)/2}
\\&
+a u u_x \smallint (u^2-u_x{}^2)^{k/2}\,du_x
+b u_x \smallint (u^2-u_x{}^2)^{(k+1)/2}\,du_x
\big)\,dx =0 .
\end{aligned}
\end{equation}

\emph{2D Boussinesq Equation}: 
The Kadomtsev--Petviashvili (KP) equation \cite{KadPet} is a well-known two-dimensional generalization of the KdV equation. 
An analogous two-dimensional generalization of the Boussinesq equation is given by \cite{AncGanRec2018}
\begin{equation}\label{2Dbousseqn}
u_{tt}- (u_{xx} + u_{yy} +a (u^2){}_{xx} +b u_{xxxx}) =0
\end{equation}
with constant coefficients $a,b\neq 0$. 
This equation \eqref{2Dbousseqn} is neither an evolution equation nor an Euler--Lagrange equation with respect to $u$. 
Its adjoint-symmetries $Q$ are determined by 
\begin{equation}
\big( D_t^2 Q - (D_x^2 Q + D_y^2 Q + 2a u D_x^2 Q + b D_x^4 Q) \big)\big|_\Esp =0 .
\end{equation}

Equation \eqref{2Dbousseqn} does not have a scaling symmetry. 
However, it does possess a combined scaling-shift symmetry \cite{AncGanRec2018}
$\X = 2t \partial_t + x\partial_x +2y\partial_y -2(u+\tfrac{1}{a})\partial_u$.
It also possesses a similar adjoint-symmetry 
\begin{equation}\label{2Dbouss.Q}
Q=\tfrac{1}{a}(y^2 -x^2) -x v_x -2y v_y - 2t v_t 
\end{equation}
where $v$ is a potential defined by $u=v_{xx}$. 
For this nonlocal adjoint-symmetry, 
we have 
\begin{equation}
Q'= -x D_x^{-1} -2 t D_t D_x^{-2} -2 y D_y D_x^{-2}
\end{equation}
and 
\begin{equation}
R_Q=-2t D_t D_x^{-2}  -2y D_y D_x^{-2}  -x D_x^{-1} -4 D_x^{-2} ,
\end{equation}
whence
\begin{equation}
R_Q^* = -D_x^{-2} + xD_x^{-1} +2t D_t D_x^{-1} +2y D_y D_x^{-2} .
\end{equation}
This yields a generalized pre-symplectic operator \eqref{S.symm-to-adjsymm} 
given by 
\begin{equation}\label{2Dbouss.Sop}
\S = -D_x^{-2} .
\end{equation}
Note this operator is symmetric, $\S^*=\S$. 

In addition to the scaling-shift symmetry, 
the 2D Boussinesq equation \eqref{2Dbousseqn} possesses \cite{AncGanRec2018}
a boost symmetry $\X = y\partial_t +t\partial_y$, 
a family of travelling wave symmetries $\X = (f_+(t+y)+f_-(t-y))\partial_u$ 
where $f_\pm(t\pm y)$ are arbitrary functions, 
and time and space translation symmetries. 
When we apply $\S$ to their characteristics, 
we obtain the adjoint-symmetries
\begin{gather}
\S(-u_x) = v_x,
\quad 
\S(-u_y) = v_y,
\quad
\S(-u_t) = v_t,
\label{2Dbouss.translation.Q}
\\
\S(-y u_t -t u_y) = y v_t + t v_y,
\label{2Dbouss.boost.Q}
\\
\S(f_\pm(t\pm y)) = \tfrac{1}{2} x^2 f_\pm(t\pm y) ,
\label{2Dbouss.travelwave.Q}
\end{gather}
and the adjoint-symmetry \eqref{2Dbouss.Q} from the scaling-shift. 

The condition \eqref{S.multiplier.condition} for these adjoint-symmetries \eqref{2Dbouss.translation.Q}--\eqref{2Dbouss.travelwave.Q} 
to be multipliers simplifies to be $(\pr\hat\X_{P}\S+\S R_{P} +P'{}^*\S)G=0$,
where $G$ is the left side of equation \eqref{2Dbousseqn}. 
This condition holds for all them, 
because they have $R_P= P'=-P'{}^*$, $[\S,P']=0$, $\pr\hat\X_{P}\S=0$. 
The resulting conservation laws are known to be \cite{AncGanRec2018}, respectively, 
an energy and two spatial momenta, a boost momentum, 
and analogs of the transverse momentum for the linear wave equation. 

For the scaling-shift adjoint-symmetry \eqref{2Dbouss.Q}, 
we have $R_P = P' -4$ and $P'{}^* = 1-P'$ where $P'=-2-2tD_t-xD_x-2yD_y$, 
and hence the multiplier condition \eqref{S.multiplier.condition} 
reduces to the form $([\S,P'] -3\S)G=0$. 
A simple computation shows that $[\S,P'] = \S$, whereby 
$[\S,P'] -3\S = -2\S \not\equiv 0$. 
Thus, the adjoint-symmetry \eqref{2Dbouss.Q} is not a multiplier, 
in accordance with Theorem~\ref{thm:op.symm-to-adjsymm}.

\section{Concluding Remarks}\label{sec:remarks}

We have developed a new way to derive generalized pre-symplectic operators for PDEs, 
both linear and nonlinear, 
using just adjoint-symmetries. 
As a corollary, this yields recursion operators for Euler--Lagrange PDEs,
using just non-variational symmetries. 

Through examples, 
we see that scaling symmetries can used to obtain the well-known recursion operators 
for the KdV equation and for linear wave equations. 
We also see that scaling-type adjoint-symmetries yield the symplectic operators known 
for the Camassa--Holm and modified Camassa--Holm peakon equations 
as well as for the Boussinesq equation. 
This suggests that there may be a deeper connection between our main formula 
and the theory of master symmetries as based on scaling vector fields. 

All of our results can be extended in a straightforward fashion 
to systems of PDEs with any number of dependent variables. 
For future work, we plan to carry out a classification of 
non-variational symmetries and adjoint-symmetries 
for various interesting classes of PDEs. 
We also plan to investigate how our main formula fits into a more geometrical formulation of recursion operators and pre-symplectic operators.

\section*{Appendix: Index notation}

We adapt the index notation used in \Ref{Olv-book}. 

Partial derivatives of $u$ with respect to $x^i$ are denoted
$u_i = \partial_{x^i} u$ 
and 
$u_{i\ldots j} = \partial_{x^i}\ldots\partial_{x^j}u$. 
Summation over a repeated single index $i$ is assumed unless otherwise noted. 

Higher derivatives are denoted by multi-indices which are defined by 
$u_J:= u_{j_1\ldots j_m}$, 
where $J=\{j_1,\ldots,j_m\}$ is an unordered set that represents the differentiation indices and $|J|= m$ is the differential order. 
Summation over a repeated multi-index will be denoted by $\sum$. 

If $J=\{j_1,\ldots,j_m\}$ and $K=\{k_1,\ldots,k_{m'}\}$ are multi-indices, 
then their union is defined by $J,K := \{j_1,\ldots,j_m,k_1,\ldots,k_{m'}\}$ 
and $|J,K| := m+m'$. 
If $K\subset J$, 
then $J/K$ denotes set of indices remaining in $J$ after all indices in $K$ are removed. 
Note the differential order is $|J/K| = |J|-|K|$. 

In this notation, total derivatives with respect to $x^i$ are given by 
$D_J := D_{j_1}\cdots D_{j_m}$
where 
$D_j = D_{x^j} = \partial_{x^j} + \sum_{K} u_{j,K}\partial_{u_K}$. 

For expressing the product rule $D_J(fg)$ in a short notation, 
it is useful to introduce the binomial coefficient 
$\smallbinom{J}{K} := \frac{(\# J)!}{(\# K)!(\# J/K)!}$. 
Here 
$\# J\!:= (\#1,\ldots,\# n)$ 
where $\# i$ denotes the multiplicity (number of occurrences) of each integer 
$i=1,\ldots,n$ in the set $J$;
and $(\# J)! := \prod_{1\leq i\leq n} (\#i)!$ where $!$ is the standard factorial. 
A useful identity is $\smallbinom{J}{K} = \smallbinom{J}{J/K}$. 

Then $D_J(fg) = \sum_{K\subseteq J}\smallbinom{J}{K} D_{J/K} f D_K g$ is the product rule.

\section*{Acknowledgements}
SCA is supported by an NSERC Discovery Grant.
BW thanks Brock University for support during the period when this work was completed.

On behalf of all authors, the corresponding author states that there is no conflict of interest. 

The reviewer is thanked for remarks which have improved parts of the paper.


\begin{thebibliography}{99}


\bibitem{BCA-book}
G.W. Bluman, A. Cheviakov, S.C. Anco,
{\em Applications of Symmetry Methods to Partial Differential Equations}, 
Springer, New York, 2009.

\bibitem{Olv-book}
P.J. Olver,
{\em Applications of Lie Groups to Differential Equations},
Springer-Verlag, New York, 1993.

\bibitem{AncBlu1997}
S.C. Anco and G. Bluman, 
Direct construction of conservation laws from field equations,
Phys. Rev. Lett. 78 (1997), 2869--2873.

\bibitem{AncBlu2002b}
S.C. Anco and G. Bluman,
Direct construction method for conservation laws of partial differential equations Part II: General treatment,
Euro. J. Appl. Math. 41 (2002), 567--585.

\bibitem{Anc-review}
S.C. Anco,
Generalization of Noether's theorem in modern form to non-variational partial differential equations.
In: Recent progress and Modern Challenges in Applied Mathematics, Modeling and Computational Science, 119--182,
Fields Institute Communications, Volume 79, 2017.

\bibitem{AncWan}
S.C. Anco and B. Wang, 
Geometrical formulation for adjoint-symmetries of partial differential equations, 
Symmetry 12(9) (2020), 1547. 

\bibitem{KerKraVer2004}
P. Kersten, J. Krasil'shchik, A. Verbotvetsky, 
Nonlocal constructions in the geometry of PDE, 
Proc. Institute Mathematics NAS of Ukraine, 50 (2004), 412--423. 

\bibitem{BorInf}
M. Born, L. Infeld, 
Foundations of the New Field Theory, 
Proc. Roy. Soc. A, 144 (1934), 425--451.

\bibitem{Dor-book}
I. Dorfman, 
{\em Dirac Structures and Integrability of Nonlinear Evolution Equations}, 
Wiley, 1993. 



\bibitem{AblCla-book}
M.A. Ablowitz, P.A. Clarkson,
{\em Solitons, Nonlinear Evolution Equations and Inverse Scattering},
Cambridge University Press, Cambridge 1991. 

\bibitem{AncRec2019}
S.C. Anco and E. Recio﻿, 
A general family of multi-peakon equations and their properties, 
J. Phys. A: Math. Theor. 52 (2019) 125203. 

\bibitem{CamHol} 
R. Camassa, D.D. Holm, 
An integrable shallow water equation with peaked solitons,
Phys. Rev. Lett. 71 (1993), 1661--1664.

\bibitem{CamHolHym} 
R. Camassa, D.D. Holm, J.M. Hyman, 
A new integrable shallow water equation,
Adv. Appl. Mech. 31 (1994), 1--33.

\bibitem{Fok95a}
A. Fokas,
The Korteweg-de Vries equation and beyond, 
Acta Appl. Math. 39 (1995), 295--305. 

\bibitem{Olv-Ros} 
P.J. Olver, P. Rosenau, 
Tri-Hamiltonian duality between solitons and solitary-wave
solutions having compact support,
Phys. Rev. E 53 (1996), 1900--1906.

\bibitem{Fok-Olv-Ros} 
A.S. Fokas, P.J. Olver, P. Rosenau,
A plethora of integrable bi-Hamiltonian equations. 
In: {\em Algebraic Aspects of Integrable Systems}, 93--101. 
Progress in Nonlinear Differential Equations and Applications, vol. 26, Brikhauser Boston, 1997. 

\bibitem{Fuc} 
B. Fuchssteiner, 
Some tricks from the symmetry-toolbox for nonlinear equations: generalizations of the Camassa-Holm equation,
Physica D 95 (1996), 229-–243.

\bibitem{Qia-Li}
Z. Qiao, X.Q. Li, 
An integrable equation with non-smooth solitons,
Theor. Math. Phys. 267 (2011), 584--589. 

\bibitem{KadPet}
B.B. Kadomstev and V.I. Petviashvili,
On the stability of waves in weakly dispersive media, 
Sov. Phys. Dokl. 15 (1970), 539--541.

\bibitem{AncGanRec2018}
S.C. Anco, M.L. Gandarias, E. Recio,
Conservation laws, symmetries, and line soliton solutions of generalized KP and Boussinesq equations with p-power nonlinearities in two dimensions,
Theor. Math. Phys. 197(1) (2018), 1393--1411. 



\end{thebibliography}
\end{document}